\journal{ }
\numberwithin{equation}{section}
\newtheorem{theorem}{Theorem}[section]
\newtheorem{lemma}[theorem]{Lemma}
\newtheorem{assumption}[theorem]{Assumption}
\newtheorem{proposition}[theorem]{Proposition}
\newtheorem{problem}{RH problem}[section]
\newtheorem{main theorem}{Theorem}
\numberwithin{figure}{section}
\begin{document}
	
	
	
	\begin{frontmatter}
		\title{Soliton Shielding of the focusing modified KdV equation}
		

		\author[inst2]{Ruihong Ma}

		\author[inst2]{Engui Fan$^{*,}$  }
		
		\address[inst2]{ School of Mathematical Sciences and Key Laboratory of Mathematics for Nonlinear Science, Fudan University, Shanghai,200433, China\\
			* Corresponding author and e-mail address: faneg@fudan.edu.cn  }
		
		
		
		

		\begin{abstract}
We consider soliton gas solutions of the modified Korteweg-de Vries (mKdV) equation, where the point spectrum of the condensate is located within a bounded domain in the upper half-plane. We first demonstrate that when the domain is a quadrature and the soliton density is an analytic function, the corresponding deterministic soliton gas coincides with a finite number of solitons, which we call this effect soliton shielding. When the domain is an ellipse and the soliton density is analytic, the corresponding deterministic soliton gas reduces the spectral data to the segment joining the foci. The initial datum of this Cauchy problem is asymptotically step-like oscillatory, described by a periodic elliptic function as \( x \to +\infty \), and it vanishes exponentially fast as \( x \to -\infty \).
		\end{abstract}
		
		\begin{keyword}
			mKdV equation; Riemann-Hilbert problem; Soliton shielding;
			
			\textit{Mathematics Subject Classification:}  35P25; 35Q51; 35Q15; 35A01; 35G25.
		\end{keyword}
	\end{frontmatter}
	\tableofcontents
	
	\section{Introduction }

This paper concerns with  the soliton shielding  for the   focusing modified Korteweg-de Vries (mKdV) equation
\begin{align}\label{eq:mkdc}
  &  q_t+6q^2q_x+q_{xxx}=0,\quad (x,t)\in\mathbb{R}\times\mathbb{R}^+,
\end{align}
with  nonstandard initial data $q(x,0)$  stemming   from the infinite soliton limit. The mKdV equation has emerged in various physical contexts, including the propagation of ultrashort few-optical cycle solitons in nonlinear media \cite{HF03}, anharmonic lattices \cite{HO92}, Alfven waves \cite{TH69}, ion acoustic solitons \cite{SW84}, traffic jams \cite{TT99}, Schottky barrier transmission lines \cite{VJ01}, thin ocean jets \cite{EL94}, among others. Mathematically,the mKdV hierarchy exihibit interesting integrable structures \cite{ZJ94,ZJ97,Monvel,Germain,ZJ02}. The mKdV equation
 (\ref{eq:mkdc}) admits a soliton solution of the form \cite{KM82}
\begin{equation}\label{eq:soliton}
q(x,t)=\pm2\kappa\,{\rm sech}\left[2\kappa\left(x-24\kappa^2t-x_0\right)\right],
\end{equation}
where the quantity $4\kappa^2$ is the wave velocity, $\kappa\in\mathbb{R}_+$ and $x_0$ is a phase shift.
The multi-soliton solutions of the   mKdV equation
 (\ref{eq:mkdc}) were  derived  by  Horita \cite{R72}, Wadati-Ohkuma \cite{KM82},  Schuur \cite{PC86} and Zhang-Yan \cite{ZY}.

Solitons are elementary, localized solutions to nonlinear evolution equations. They manifest as individual entities, embodying traveling wave solutions that maintain their shape during propagation. Alternatively, solitons can emerge as collectives, evolving as an ensemble and asymptotically decomposing into separate solutions and potential sub-ensembles of solitons. Following the discovery of soliton ensembles, their interpretation as particle-like entities has been a source of novel investigations.

 The notion of soliton gas, an infinite statistical ensemble of interacting solitons, was first introduced by Zakharov \cite{VEZ71}, who derived a kinetic equation for a ``rarefied" gas of Korteweg-de Vries (KdV) solitons by considering the modification of the soliton velocity due to the position shifts in its pairwise collisions with other solitons in the gas. The generalization of Zakharov's kinetic equation to the case of the KdV soliton gas of finite density was obtained  by considering the infinite-phase, thermodynamic-type limit of the Whithan modulation equation \cite{G.A. El}.  The kinetic theory for solitons has been extended to the cases of focusing Nonlinear Schr\"odinger (NLS) equation \cite{GE05}, defocusing and resonant NLS equation \cite{TG21}, and to the case of breather gasses in the NLS equation \cite{GE20} and mKdV equation \cite{MT23}. The finite-gap theory derivation in \cite{G.A. El} served as a motivation for a more intuitive, physical construction of the kinetic equation for a dense soliton gas of the focusing NLS equation in \cite{ElGAM}.  A RH problem describes a soliton gas  as the limit of a finite $N$-soliton configuration as $N\to+\infty$, and established rigorous asymptotics of the KdV potential in several different regimes  \cite{MT21}.
	
The notable emergent characteristic observed is that, as \( N \to \infty \), for specific types of domains and densities, a phenomenon known as ``soliton shielding'' occurs, wherein the gas behaves as if it were a finite number of solitons. In \cite{MT223}, the authors demonstrate that when the domain is a disk and the soliton density of NLS equation is an analytic function, the corresponding deterministic soliton gas unexpectedly yields a one-soliton solution, with the point spectrum located at the center of the disk. When the domain is an ellipse, soliton shielding reduces the spectral data to the soliton density concentrating between the foci of the ellipse. The $\bar{\partial}$-problem of the NLS equation on the domain, as discussed in \cite{KT08,MK19,BG83}, can be reduced to a classical Riemann-Hilbert (RH) problem with discontinuities along a collection of arcs. This reduction is also an effect of soliton shielding, as the effective soliton charge can be diminished from a two-dimensional domain to a collection of arcs.
		
From \cite{GJ20}, we realize that general solution to the focusing mKdV equations will consist of solitons moving to the right, breathers traveling in both directions. In \cite{MT23}, the authors focus exclusively  on the discrete spectrum associated with solitons $i\zeta_j$, where $\zeta_j > 0$ for $j = 1, \dots, N$. The soliton gas is generated from a continuum of poles that accumulate on the intervals $(-i\eta_2, -i\eta_1) \cup (i\eta_1, i\eta_2)$, with $0 < \eta_1 < \eta_2$, and with positive norming constants. They derive an expression for the solution of the soliton gas combined with a trial soliton in terms of a Fredholm determinant and analyze its asymptotic behavior for large times.


 The aim of this paper is to demonstrate the soliton shielding effect.  Our attention is directed towards cases where the spectrum is given by
		\[
		\mathcal{Z} = \left\{\{z_k, c_k\}_{k=1}^{N_1}, \{\kappa_j, h_j\}_{j=1}^{N_2}\right\},\,\,\, N=N_1+N_2,
		\]
		with the conditions that the soliton has \(\kappa_j = i\zeta_j\) where \(\zeta_j > 0\), and the breather has \(z_k = a_k + ib_k\) with \(a_k> 0\) and \(b_k > 0\). Furthermore, it is required that the spectrum converges on simply connected bounded domain \(\mathcal{D}\), which are located within the upper complex plane.
For domains \( \mathcal{D} \) that are both sufficiently smooth and simply connected, the boundary \( \partial\mathcal{D} \) is characterized by the Schwarz function \( S(z) \) associated with \( \mathcal{D} \), which fulfills the relationship
			\[
			\bar{z} = S(z).
			\]
			This Schwarz function can be analytically continued to a maximal domain \( \mathcal{D}^0 \subset \mathcal{D} \). The characteristics of \( \mathcal{D}^0 \) are detailed below:
			\begin{enumerate}[\(\bullet\)]
				\item In the case of quadrature domains, \( \mathcal{D}^0 \) is obtained by removing a finite set of points from \( \mathcal{D} \).
				\item For other domain classes, the set \( \mathcal{D} \setminus \mathcal{D}^0 \) is composed of a collection of smooth arcs.
			\end{enumerate}
			


The article is organized as follows: In Section \ref{sec2}, we first establish the deterministic soliton gas for the mKdV equation \ref{eq:mkdc}, and then we discuss the soliton shielding effect within the domain \(\mathcal{D}\), which may be a quadrature domain or an ellipse, as detailed in Sections \ref{sec222} and \ref{sec2222}. Finally, in Section \ref{secinitial}, we show that when the domain \(\mathcal{D}\) is an ellipse, the soliton shielding effect leads to an initial condition characterized by step-like oscillatory behavior.		
 \section{Soliton shielding }\label{sec2}
    \subsection{The deterministic soliton gas for the mKdV equation}

  In this section, our goal is to derive the soliton shielding for a domain \(\mathcal{D}\) that is either a quadrature or an ellipse. Initially, we consider the deterministic soliton gas for the mKdV equation as the number of solitons \(N\) approaches infinity. This derivation is conducted under the assumption that the norming constants \(c_j\) are scaled as \(\mathcal{O}(1/N)\). By employing the inverse scattering transform, which reconstructs the \(N\)-soliton solution from the spectral data \(\mathcal{Z}\), we can establish the RH problem for the pure \(N\)-soliton solution. The RH problem is characterized as follows \cite{GJ20}.
	\begin{problem}\label{RH21}
		Find an analytic function $M(z):\mathbb{C}\to SL_2(\mathbb{C})$ with the following properties.
		
		\begin{enumerate}
			\item $ M(z)=I+\mathcal{O}(z^{-1})$ as $z\to\infty$.
		
			\item $M(z)$ has simple poles at each soliton  $\kappa_j\in{\mathcal{{Z}}}$ at which
			\begin{align*}
				&\underset{\kappa=\kappa_j}{{\rm Res}} M(z)=\underset{\kappa\to\kappa_j}{\lim}M(z)\begin{pmatrix}
					0&0\\
					h_je^{2i\theta(z)}&0
				\end{pmatrix},\\
				&\underset{\kappa=\bar\kappa_j}{{\rm Res}} M(z)=\underset{\kappa\to\bar\kappa_j}{\lim}M(z)\begin{pmatrix}
					0&-\bar h_je^{-2i\theta(\bar z)}\\
					0&0
				\end{pmatrix}.
			\end{align*}
		$M(z)$ has simple poles at each breather $z_k\in{\mathcal{{Z}}}$ at which
				\begin{align*}
				&\underset{z=z_k}{{\rm Res}} M(z)=\underset{z\to z_k}{\lim}M(z)\begin{pmatrix}
					0&0\\
					c_je^{2i\theta(z)}&0
				\end{pmatrix},\\
				&\underset{z=\bar z_k}{{\rm Res}} M(z)=\underset{z\to \bar z_k}{\lim}M(z)\begin{pmatrix}
					0&-\bar c_je^{-2i\theta(\bar z)}\\
					0&0
				\end{pmatrix},\\
					&\underset{z=-\bar z_k}{{\rm Res}} M(z)=\underset{z\to -\bar z_k}{\lim}M(z)\begin{pmatrix}
					0&0\\
					-\bar c_je^{2i\theta(z)}&0
				\end{pmatrix},\\
				&\underset{z=-z_k}{{\rm Res}} M(z)=\underset{z\to -z_k}{\lim}M(z)\begin{pmatrix}
					0&c_jje^{-2i\theta(\bar z)}\\
					0&0
				\end{pmatrix},
				\end{align*}
	where $\theta(z)=xz+4tz^3$. The \( N \)-soliton solution \( q_N(x,t) \) can be  determined by
	\begin{equation}\label{eq:2.28}
		q_N(x,t) = 2i \lim_{|z| \to \infty} (zM(z))_{12}.
	\end{equation}
		\end{enumerate}
	\end{problem}

We initially employ a transformation to eliminate the singularities of \( M(z) \). Specifically, let \( \gamma \) be a closed, anticlockwise-oriented contour that encompasses all the breathers \(\{(z_k, -\bar{z}_k)\}_{k=1}^{N_1}\) and solitons \(\{\kappa_j\}_{j=1}^{N_2}\) in the upper half-plane, with \( \mathcal{D} \) denoting the finite domain bounded by \( \partial\mathcal{D}= \gamma \). Similarly, we define \( \bar{\gamma} \) as a closed, clockwise-oriented contour that encircles all the breathers \(\{(\bar{z}_k, -z_k)\}_{k=1}^{N_1}\) and solitons \(\{\bar{\kappa}_j\}_{j=1}^{N_2}\) in the lower half-plane, where \( \bar{\mathcal{D}} \) represents the finite domain delimited by \( \bar{\gamma} \).

Subsequently, we reformulate the jump conditions into the following RH problem.
\begin{problem}
		Find an analytic function $M(z):\mathbb{C}\,\setminus (\gamma\,\cup\,\bar\gamma)\to SL_2(\mathbb{C})$ with the following properties.
		
		\begin{enumerate}
			\item $ M(z)=I+\mathcal{O}(z^{-1})$ as $z\to\infty$.
		
			\item For $z\in\gamma\,\cup\,\bar\gamma$, the boundary values $M_\pm(z)$ satisfy the jump relation
			\begin{equation*}
					M_+(z)=M_-(z)V(z),
			\end{equation*}
			where
	\begin{equation*}
		V(z)=\begin{cases}
			\begin{pmatrix}
				1&0\\
				-\left(\sum_{k=1}^{N_1}\frac{c_ke^{2i\theta(z)}}{z-z_k}-\sum_{k=1}^{N_1}\frac{\bar c_ke^{2i\theta(z)}}{z+\bar z_k}+\sum_{j=1}^{N_2}\frac{h_je^{2i\theta(z)}}{z-\kappa_j}\right)&1
\end{pmatrix},\,\,z\in\gamma,\\[18pt]
			\begin{pmatrix}
1&\sum_{k=1}^{N_1}\frac{\bar c_ke^{-2i\theta(\bar z)}}{z-\bar z_k}-\sum_{k=1}^{N_1}\frac{c_ke^{2i\theta(z)}}{z+z_k}+\sum_{j=1}^{N_2}\frac{\bar h_je^{2i\theta(z)}}{z-\bar\kappa_j}\\
				0&1
\end{pmatrix},\qquad z\in\bar\gamma.
		\end{cases}
	\end{equation*}
	
		\end{enumerate}
	\end{problem}
	
We consider the limit as the number of solitons approaches infinity, with their point spectrum \( \mathcal{Z} \) uniformly filling the domain \( \mathcal{D} \). It is assumed that the norming constants \(c_k\) and \(h_j\) are scaled by a smooth function \(\alpha_\ell(z, \bar{z})\) for \(\ell = 1, 2\) as
		 \begin{equation}
c_k=\frac{S}{\pi N_1}\alpha_1(z_k,\bar z_k)\chi_{\mathcal{D}}(z_k),\quad
h_j=\frac{S}{\pi N_2}\alpha_2(\kappa_j,\bar \kappa_j)\chi_{\mathcal{D}}(\kappa_j),	
	\end{equation}
where $\chi_{\mathcal{D}}$ denotes the characteristic function of the domain $\mathcal{D}$, and let $S$ represent the area of $\mathcal{D}$. On the physical side, scaling the norming constants to be small implies that the individual solitons are centered at positions that are logarithmically large in $N$. Consequently, in the finite part of the $(x,t)$-plane, only the tails of the solitons contribute to the sum \cite{GE20}.

Upon taking the limit $N\to\infty$, we get
	\begin{align}
	&\sum_{j=1}^{N_1}\frac{c_k}{z-z_k}=\sum_{j=1}^{N_1}\frac{S}{\pi N_1}\frac{\alpha_1(z_k,\bar z_k)}{z-z_k}\chi_{\mathcal{D}}(z_k) \xrightarrow{N_1\to\infty} \iint_{\mathcal{D}}\frac{\alpha_1(w,\bar w)}{z-w}\frac{d^2w}{\pi},\\
	&\sum_{j=1}^{N_2}\frac{h_j}{\kappa-\kappa_j}=\sum_{j=1}^{N_2}\frac{S}{\pi N_2}\frac{\alpha_2(\kappa_j,\bar\kappa_j)}{\kappa-\kappa_j}\chi_{\mathcal{D}}(\kappa_j) \xrightarrow{N_2\to\infty} \iint_{\mathcal{D}}\frac{\alpha_2(w,\bar w)}{z-w}\frac{d^2w}{\pi},	
\end{align}
	where $d^2w=\frac{dw\land\bar w}{2i}=dxdy$ is the usual area element.
		
	Consequently, we arrive at a limiting RH problem as following.
	\begin{problem}
	Find an analytic function $M^\infty(z):\mathbb{C}\,\setminus (\gamma \cup \bar\gamma)\to SL_2(\mathbb{C})$ with the following properties.
	
	\begin{enumerate}
		\item $ M^\infty(z)=I+\mathcal{O}(z^{-1})$ as $z\to\infty$.
	\item For $z\in\gamma \cup \bar\gamma$, the boundary values $M^\infty_\pm(z)$ satisfy the jump relation
		\begin{equation}
			M^\infty_+(z)=M^\infty_-(z)V^\infty(z),
		\end{equation}
		where
		\begin{equation*}
		V^\infty(z)=\begin{cases}
			\begin{pmatrix}
				1&0\\
			\iint_{\mathcal{D}}\frac{e^{2i\theta(w)}(\alpha_1(w,\bar w)-\overline{\alpha_1(\bar w,w)}+\alpha_2(w,\bar w))}{w-z}\frac{d^2w}{\pi}&1
			\end{pmatrix},\,\,\,z\in\gamma,\\[15pt]
			\begin{pmatrix}
				1&\iint_{\bar{\mathcal{D}}}\frac{e^{-2i\theta(w)}(\overline{\alpha_1(\bar w,w)}-\alpha_1(w,\bar w)+\overline{\alpha_2(\bar w, w)})}{z-w}\frac{d^2w}{\pi}\\
				0&1
			\end{pmatrix}, z\in\bar\gamma.
		\end{cases}
	\end{equation*}
	\end{enumerate}
\end{problem}
The limiting mKdV soliton is derived from the matrix $M^\infty(z)$ as follows:
\begin{equation}\label{eq2.9}
q_\infty(x,t) = 2i \lim_{|z| \to \infty} (zM^\infty(z))_{12}.
\end{equation}

\subsection{Shielding of soliton gas for Quadrature domain}\label{sec222}
We initiate our analysis by considering the class of quadrature domains defined by:
\begin{equation}\label{eq:2.11}
\mathcal{D} = \left\{ z \in \mathbb{C} \mid \left| (z - d_0)^m - d_1 \right| < \rho \right\}, \quad m \in \mathbb{N},
\end{equation}
where \( d_0 \in \mathbb{C}^+ \) and \( |d_1|, \rho > 0 \) are sufficiently small to ensure that \( \mathcal{D} \subset \mathbb{C}^+ \). The parameter \( m \) is associated with the symmetry of the domain.
For a domain \( \mathcal{D} \) that is sufficiently smooth and simply connected, its boundary \( \partial\mathcal{D} \) can be described using the Schwarz function \( S(z) \) associated with \( \mathcal{D} \), given by the equation:
\[
\bar{z} = S(z), \quad S(z) = \bar{d}_0 + \left( \bar{d}_1 + \frac{\rho^2}{(z - d_0)^m - d_1} \right)^{1/m}.
\]

	For a quadrature domain \( \mathcal{D} \) and smooth functions \( \alpha_\ell(z, \bar{z}) \) for \( \ell = 1,2 \), the class of solutions to the mKdV equation remains unexplored. In the specific case where
	\begin{equation}\label{eq:ww}
		\alpha_\ell(z, \bar{z}) = N(\bar{z} - \bar{d}_{0})^{N-1}\varpi_\ell(z),
	\end{equation}
 with \( \varpi_\ell(z) \) being analytic in \( \mathcal{D} \), we can apply Green's theorem for \( z \notin \mathcal{D} \) and derive
\begin{equation}\label{eq:2.10}
\iint_{\mathcal{D}} \frac{e^{2i\theta(w)} \alpha_\ell(z, \bar{z}) \, d^2w}{\pi(z - w)} = \int_{\partial\mathcal{D}} \frac{\varpi_\ell(w)(\bar{w} - \bar{d}_0)^n e^{2i\theta(w)}}{z - w} \frac{dw}{2\pi i},
\end{equation}
and a similar expression holds for the integral over the complement of \( \mathcal{D} \), denoted by \( \bar{\mathcal{D}} \).


			\begin{theorem}	\label{th1.1}
			Let the quadrature domain \( \mathcal{D}\) be defined as in equation \eqref{eq:2.11}. Then, the gas solution \( q_\infty(x,t)\) given by \eqref{eq2.9} coincides with the \( N\)-soliton solution \( q_N(x,t)\) provided in \eqref{eq:2.28}, which is the solution set of the equation
			\[
			(z - d_{0})^N = d_{1}.
			\]
			The corresponding norming constants are given by
			\begin{align*}
				&c_k= \frac{\rho^2 \varpi_1(z_k)}{\prod_{s \neq k} (z_k - z_s)}, \quad k = 1, \dots, N_1,\\
				&\kappa_j = \frac{\rho^2 \varpi_2(\kappa_j)}{\prod_{m \neq j} (\kappa_j - \kappa_m)}, \quad j = 1, \dots, N_2.
			\end{align*}
			Here, \(\varpi_j\) for \(j=1,2\) are provided in equation \(\eqref{eq:ww}\).
		\end{theorem}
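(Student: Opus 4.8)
The plan is to start from the limiting RH problem for $M^\infty(z)$, reduce its jump matrix $V^\infty(z)$ to a rational matrix whose only singularities are simple poles at the $N$ zeros of $(z-d_0)^N=d_1$, and then recognize the result as the $N$-soliton RH problem~\ref{RH21}; uniqueness of the latter will then yield $q_\infty=q_N$. The first step is to make the area integrals in $V^\infty$ explicit by means of the ansatz \eqref{eq:ww}: since $\varpi_\ell$ is analytic in $\mathcal D$, the density $e^{2i\theta(w)}\alpha_\ell(w,\bar w)$ equals $\bar\partial_w$ of $(\bar w-\bar d_0)^{N}\varpi_\ell(w)e^{2i\theta(w)}$, so the Cauchy--Pompeiu formula — this is precisely \eqref{eq:2.10} — converts each of the sums that build $V^\infty$ on $\gamma$, for $z$ outside $\mathcal D$, into the contour integral $\frac{1}{2\pi i}\int_{\partial\mathcal D}\frac{(\bar w-\bar d_0)^{N}\varpi_\ell(w)e^{2i\theta(w)}}{z-w}\,dw$, and similarly over $\bar{\mathcal D}$ for the $\bar\gamma$ part. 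The reflected contribution (the one carrying the $-\bar z_k$ poles, i.e.\ the $\overline{\alpha_1(\bar w,w)}$ term) must be brought to the same form by the substitution $w\mapsto-\bar w$, which uses that $\mathcal D$ is invariant under this reflection and hence constrains $d_0$ to lie on the imaginary axis.

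The decisive step is the Schwarz-function substitution on $\partial\mathcal D$, where $\bar w=S(w)$ and, for the domain \eqref{eq:2.11},
\[
(\bar w-\bar d_0)^{N}=\big(S(w)-\bar d_0\big)^{N}=\bar d_1+\frac{\rho^{2}}{(w-d_0)^{N}-d_1}
\]
is a rational function of $w$ with simple poles exactly at the $N$ roots $w_1,\dots,w_N$ of $(w-d_0)^N=d_1$, all interior to $\mathcal D$ (distinctness and $\mathcal D\subset\mathbb C^+$ are guaranteed by $|d_1|,\rho$ small). The $\bar d_1$-part is analytic in $\mathcal D$ and drops out, and using $\frac{d}{dw}\big[(w-d_0)^{N}-d_1\big]\big|_{w_k}=\prod_{s\neq k}(w_k-w_s)$ the residue theorem collapses each contour integral to $\sum_{k}\frac{\rho^{2}\varpi_\ell(w_k)}{\prod_{s\neq k}(w_k-w_s)}\frac{e^{2i\theta(w_k)}}{z-w_k}$. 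Letting $z\to\gamma$ (the Cauchy transform of an area density is continuous across $\partial\mathcal D$, so nothing is lost), this exhibits $V^\infty$ on $\gamma$ as unit-diagonal and lower-triangular with a rational $(2,1)$-entry whose poles are the $w_k$ and whose residues are exactly $c_k=\rho^{2}\varpi_1(z_k)/\prod_{s\neq k}(z_k-z_s)$ at the breather nodes and $h_j=\rho^{2}\varpi_2(\kappa_j)/\prod_{m\neq j}(\kappa_j-\kappa_m)$ at the soliton nodes; the conjugate computation does the same for $\bar\gamma$ with nodes $\bar w_k$.

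It remains to undo the contour. Let $B(z)$ denote the triangular jump just obtained and set $\tilde M=M^\infty$ off $\gamma\cup\bar\gamma$, $\tilde M=M^\infty B^{-1}$ inside $\gamma$, with the transposed modification inside $\bar\gamma$. Because $B$ is unit-diagonal triangular and its off-diagonal residues are exactly the norming data above, $\tilde M$ extends analytically across $\gamma\cup\bar\gamma$ and acquires simple poles only at the $w_k$ (and their conjugates and negatives) satisfying precisely the residue relations of RH problem~\ref{RH21} with the stated $c_k,h_j$; it is still $I+\mathcal O(z^{-1})$ at infinity since $B\to I$ there, and $\det\tilde M\equiv1$ by Liouville. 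Hence $\tilde M$ is the $N$-soliton solution, and since $B$ has vanishing $(1,2)$-entry,
\[
q_\infty(x,t)=2i\lim_{z\to\infty}(zM^\infty(z))_{12}=2i\lim_{z\to\infty}(z\tilde M(z))_{12}=q_N(x,t).
\]

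The step I expect to be the main obstacle is the symmetry bookkeeping: one must verify that the reflected pole contribution (the $\overline{\alpha_1(\bar w,w)}$ term) is reduced through the \emph{same} quadrature/Schwarz mechanism, which both constrains the location of $\mathcal D$ — hence of $d_0$ and $d_1$ — and forces the power $N$ in \eqref{eq:ww} to coincide with the symmetry index $m$ of \eqref{eq:2.11} (if $m\mid N$ but $m<N$ the $w_k$ would become higher-order poles and the output would be a multipole rather than an $N$-soliton). A minor technical point is upgrading \eqref{eq:2.10} from $z\notin\mathcal D$ to $z\in\gamma$, handled by the continuity of Cauchy transforms of area measures; the remaining verifications — single-valuedness of the unwrapping factors, the $SL_2$ normalization, and uniqueness of the soliton RH problem — are routine.
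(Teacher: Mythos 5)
Your proposal follows essentially the same route as the paper's proof: Green's theorem converts the area integrals in $V^\infty$ into contour integrals over $\partial\mathcal{D}$, the Schwarz-function substitution $\bar w = S(w)$ renders the integrand rational, and the residue theorem at the $N$ roots of $(w-d_0)^{N}=d_1$ yields precisely the stated norming constants. You are in fact more complete than the paper, which closes with ``the results then follow immediately'' where you explicitly undo the contour deformation to recover the residue conditions of RH problem~\ref{RH21}; your remarks on the reflection symmetry of $\mathcal{D}$ (needed for the $-\bar z_k$ poles) and on matching the exponent in \eqref{eq:ww} to the symmetry index $m$ of \eqref{eq:2.11} flag genuine loose ends that the paper leaves implicit.
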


\begin{proof}
  The solution \( q_\infty(x,t) \) given in equation \eqref{eq2.9} can be expressed as
\[
M^\infty = I + \frac{1}{2\pi i} \int_{\gamma \cup \bar{\gamma}} \frac{\mu(s)(V^\infty(s) - I)}{s - z} \, ds,
\]
where
\[
\mu = (I - C_w)^{-1} I, \quad C_w f = C_-(f(V^\infty - I)),
\]
and the Cauchy projection operator is defined as
\[
(C_\pm f)(z) = \lim_{z' \to z} \frac{1}{2\pi i} \int_{\gamma \cup \bar{\gamma}} \frac{f(s)}{s - z'} \, ds.
\]

	Initially, we examine the scenario to solitons. Employing a similar approach, we can derive the corresponding solution for breathers. This solution is extracted from  \eqref{eq:2.10} by setting \(m = N_2\) in  \eqref{eq:2.11}. Subsequently, we substitute \(\bar{w} = S(w)\) into the contour integral \eqref{eq:2.10} and invoke the residue theorem at the \(N_2\) poles, which are determined by the solutions \(\{\kappa_j\}_{j=1}^{N_2}\) of the equation \((z - d_0)^{N_2} = d_1\). As a result, we obtain the following:
\begin{align}\nonumber
	& \int_{\gamma} \frac{\left(\bar{w}-\bar{d}_0\right)^{N_2} \varpi_2(w) \mathrm{e}^{2i\theta(w)}}{z-w} \frac{d w}{2 \pi i} \\\nonumber
	& =\int_{\gamma }\left(S(w)-\bar{d}_0\right)^{N_2} \varpi_2(w) \frac{\mathrm{e}^{2i\theta(w)}}{z-w} \frac{d w}{2 \pi i} \\\label{eq:2.18}
	& =\rho^2 \sum_{j=1}^{N} \frac{\varpi_2\left(\kappa_j\right)}{\prod_{s \neq j}\left(\kappa_j-\kappa_s\right)} \frac{\mathrm{e}^{2 it\theta\left(\kappa_j\right)}}{z-\kappa_j}, \quad z \notin \mathcal{D}.
\end{align}
The results then follow immediately.
\end{proof}

Here we provide an example by considering the simplest quadrature domain when \( N= m = 1 \). Such a domain coincides with the disk \(\mathbb{D}_\rho(z_0)\) of radius \(\rho > 0\) centered at \( \kappa_1\).
	
We then extract the one-soliton solution \eqref{eq:soliton} of the mKdV equation \eqref{eq:mkdc} with \( \kappa_1 =i\zeta \) and \( c_1 = \rho^2 r(\kappa_1) \). The phase shift is subsequently determined by
\[
x_0 = \frac{1}{2i\kappa_1} \ln \frac{2i\kappa_1}{|\rho^2 r(\kappa_1)|},
\]
where the radius \(\rho\) of the disk and the value of the function \( r(z) \) at \( \kappa_1 \) are the contributing factors to the phase shift of the one-soliton.

\subsection{Shielding of soliton gas for Elliptic domian}\label{sec2222}
In this section we consider the case where $\mathcal{D}$ is an ellipse. Assuming the foci $E_1$ and $E_2$ are on the imaginary axis, we have $E_1 = i\eta_1$ and $E_2 = i\eta_2$ with $\eta_2 > \eta_1 > 0$. The ellipse equation is:
\begin{equation}\label{eq:ellipse}
	\sqrt{{\rm Re}^2\,z+({\rm Im}\,z-\eta_1)^2}+\sqrt{{\rm Re}^2\,z + ({\rm Im}\,z - \eta_2)^2}=2\rho>0,
\end{equation}
which positioned in the upper half-plane by selecting $\rho$ to be sufficiently small. The boundary $\partial\mathcal{D}$ is characterized by the Schwartz function $S(z)$ of the ellipse, given by:
\begin{equation}
	\bar z=S(z)=\left(1-\frac{8\rho^2}{(\eta_2-\eta_1)^2}\right)(z+\frac{\eta_1+\eta_2}{2i})+2\frac{\rho}{c^2}\sqrt{\rho^2-c^2}\tilde S(z),
\end{equation}
where $\tilde S(z)=\sqrt{(z-i\eta_1)(z-i\eta_2)}$. The function \( S(z) \) is analytic in \( \mathbb{C} \setminus [i\eta_1, i\eta_2] \) with the line segment oriented upwards.

In such a case, we choose \( \alpha_\ell(z, \bar{z}) = \varpi_\ell(z) \) for $\ell=1,2$ to be analytic in \( \mathcal{D} \). Consequently, the integral along the boundary \( \partial\mathcal{D} \) of the ellipse in equation \(\eqref{eq:2.10}\) can be deformed to a line integral on the segment \( \mathcal{I} = [i\eta_1, i\eta_2] \), that is:

\begin{equation}
	\int_{\partial\mathcal{D}}\frac{\alpha_\ell(w)\bar we^{2i\theta(w)}}{z-w}\frac{dw}{2\pi i}=\int_{\mathcal{I}}\frac{\varpi_\ell(w)\delta S(w)e^{2i\theta(w)}}{z-w}\frac{dw}{2\pi i},\quad \ell=1,2,
\end{equation}
where $\delta S(z)=S_+(z)-S_-(z)$. Next we define
\begin{equation}\label{eq:transT}
	T(z)=\begin{cases}
		M^\infty(z),\qquad\quad z\in\mathbb{C}\setminus\{\mathcal{D}\cup\bar{\mathcal{D}}\},\\
		M^\infty(z)J(z),\quad\, z\in\mathcal{D}\cup\bar{\mathcal{D}},
	\end{cases}
\end{equation}
where

		\begin{equation}
	J(z)=\begin{cases}
		\begin{pmatrix}
			1&0\\
			\int_{\mathcal{I}}\frac{(\varpi_1(w)-\overline{\varpi_1(\bar w)}+\varpi_2(w))\delta S(w)e^{2i\theta(w)}}{z-w}\frac{dw}{2\pi i}&1
		\end{pmatrix},\quad\,\, z\in\mathcal{D},\\
		\begin{pmatrix}
			1&\int_{\bar{\mathcal{I}}}\frac{e^{-2i\theta(w)}\overline{ \varpi_1(\bar w)}-\varpi_1(w)+\overline{\varpi_2(\bar w)}\delta\overline{ S(\bar w)}}{w-z}\frac{dw}{2\pi i}\\
			0&1
		\end{pmatrix},\quad z\in\bar{\mathcal{D}},
	\end{cases}
\end{equation}

Consequently, we arrive at the following RH problem for \( T(z) \).
\begin{problem}\label{RH2.4}
	Find an analytic function $T(z):\mathbb{C}\setminus(\mathcal{I} \cup \bar{\mathcal{I}})\to SL_2(\mathbb{C})$ with the following properties.
	
	\begin{enumerate}
		\item $ T(z)=I+\mathcal{O}(z^{-1})$ as $z\to\infty$.
		\item For $z\in\mathcal{I} \cup \bar{\mathcal{I}}$,
		the boundary values $T_{\pm}(z)$ satisfy the jump relation
		$$T_+(z)=T_-(z)J(z),$$
		where
		\begin{equation}\label{eq:jumpp}
			J(z)=\begin{cases}
				\begin{pmatrix}
					1&0\\
					-r(z)e^{-2i\theta(z)}&1
				\end{pmatrix},\quad z\in\mathcal{I},\\[15pt]
				\begin{pmatrix}
					1&\overline{ r(\bar z)} e^{2i\theta(z)}\\
					0&1
				\end{pmatrix},\qquad\,\, z\in\bar{\mathcal{I}},
			\end{cases}
		\end{equation}
		where $r(z)=\delta S(z)(\varpi_1(w)-\overline{\varpi_1(\bar w)}+\varpi_2(w))$.
	\end{enumerate}
\end{problem}


\begin{theorem}	\label{th1.2}
				Let the ellipse domain \( \mathcal{D} \) be defined as in equation \(\eqref{eq:ellipse}\). Consequently, the gas solution \( q_\infty(x,t) \), as given in equation \(\eqref{eq2.9}\), corresponds to an infinite number of spectral points that are uniformly distributed along the segments \( \mathcal{I} \cup \bar{\mathcal{I}} \).
			\end{theorem}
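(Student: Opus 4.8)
The plan is to mimic, \emph{mutatis mutandis}, the quadrature argument of Theorem \ref{th1.1}: start from the limiting soliton-gas RH problem for $M^\infty$, use the Green's-theorem identity \eqref{eq:2.10} to turn the two-dimensional integrals over the ellipse into contour integrals over $\partial\mathcal D$, substitute the Schwarz function $\bar w=S(w)$, and then deform $\partial\mathcal D$ inward. The key structural difference is that for the ellipse the Schwarz function $S(z)$ is not meromorphic inside $\mathcal D$ (which in the quadrature case produced finitely many poles, i.e.\ finitely many solitons) but has the branch cut $\mathcal I=[i\eta_1,i\eta_2]$; consequently the contraction of $\partial\mathcal D$ does not pick up a finite residue sum but collapses onto the slit $\mathcal I$, with the jump $\delta S=S_+-S_-$ playing the role of the residues. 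This is precisely the mechanism that the transformation $T(z)$ in \eqref{eq:transT} is built to implement, and the resulting Problem \ref{RH2.4} is, by construction, the RH problem whose "spectral support" is the pair of segments $\mathcal I\cup\bar{\mathcal I}$. So the theorem amounts to (i) verifying rigorously that $q_\infty$ is reconstructed from the solution of Problem \ref{RH2.4}, and (ii) recognising Problem \ref{RH2.4} as the $N\to\infty$ limit of an $N$-soliton RH problem with discrete spectrum filling $\mathcal I\cup\bar{\mathcal I}$ uniformly.

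For step (i) I would argue as follows. Since $\alpha_\ell(z,\bar z)=\varpi_\ell(z)$ is analytic in $\mathcal D$ and $\theta(w)=xw+4tw^{3}$ is entire, \eqref{eq:2.10} rewrites each area integral in $V^\infty$ as $\int_{\partial\mathcal D}\frac{\varpi_\ell(w)\bar w\,e^{2i\theta(w)}}{z-w}\frac{dw}{2\pi i}$ for $z\notin\mathcal D$. Substituting $\bar w=S(w)$, the integrand $\frac{\varpi_\ell(w)S(w)e^{2i\theta(w)}}{z-w}$ is analytic in $\mathcal D\setminus\mathcal I$ (recall $S$ is analytic off $[i\eta_1,i\eta_2]$ and $z\notin\mathcal D$), so Cauchy's theorem contracts $\partial\mathcal D$ onto $\mathcal I$, leaving only the contribution of the jump $\delta S(w)$; the conjugate computation, using the symmetries $w\mapsto\bar w$ and $w\mapsto-\bar w$ of the mKdV RH problem, yields the companion line integral on $\bar{\mathcal I}$. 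Feeding these into $M^\infty$ and setting $T=M^\infty$ off $\mathcal D\cup\bar{\mathcal D}$ and $T=M^\infty J$ inside, one checks: $T$ is analytic on $\mathbb C\setminus(\mathcal I\cup\bar{\mathcal I})$; $T=I+\mathcal O(z^{-1})$ as $z\to\infty$ because $J\equiv I$ for large $z$; and, by the Sokhotski--Plemelj formula, the boundary values of $T$ across $\mathcal I$ and $\bar{\mathcal I}$ satisfy exactly \eqref{eq:jumpp} with $r(z)=\delta S(z)\big(\varpi_1(z)-\overline{\varpi_1(\bar z)}+\varpi_2(z)\big)$. Hence $T$ solves Problem \ref{RH2.4}, and since $T\equiv M^\infty$ near $z=\infty$, the reconstruction \eqref{eq2.9} produces the same $q_\infty(x,t)$.

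For step (ii) I would discretise. Partition $\mathcal I$ into $N$ subintervals with nodes $\kappa_1,\dots,\kappa_N$ and increments $\Delta w_j=\kappa_{j+1}-\kappa_j$, so that
\[
\int_{\mathcal I}\frac{r(w)e^{-2i\theta(w)}}{z-w}\frac{dw}{2\pi i}
=\lim_{N\to\infty}\sum_{j=1}^{N}\frac{1}{z-\kappa_j}\,\frac{r(\kappa_j)\,\Delta w_j}{2\pi i}\,e^{-2i\theta(\kappa_j)} .
\]
The right-hand side is precisely the lower-triangular "opened" jump produced by an $N$-soliton configuration with poles $\kappa_j\in\mathcal I$ and norming constants $h_j=r(\kappa_j)\Delta w_j/(2\pi i)=\mathcal O(1/N)$ (with the mirrored data on $\bar{\mathcal I}$), exactly as in the passage from Problem \ref{RH21} to the earlier "$\gamma\cup\bar\gamma$" RH problem. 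Therefore Problem \ref{RH2.4} is the $N\to\infty$ limit of that $N$-soliton RH problem with spectrum accumulating on $\mathcal I\cup\bar{\mathcal I}$; taking the $\kappa_j$ equally spaced ($\Delta w_j=i(\eta_2-\eta_1)/N$) gives the uniform distribution in the statement, the local soliton/norming-constant density being read off from $r(w)\,dw$. Composed with step (i), this identifies $q_\infty$ with the gas solution generated by infinitely many spectral points uniformly distributed on $\mathcal I\cup\bar{\mathcal I}$.

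The main obstacle I anticipate is the behaviour at the branch points $i\eta_1,i\eta_2$: $\delta S(w)$ inherits the square-root vanishing of $\tilde S(w)=\sqrt{(w-i\eta_1)(w-i\eta_2)}$ there, so one must confirm that the line integral on $\mathcal I$ converges, that no endpoint contributions are generated when $\partial\mathcal D$ is contracted, and that the reduced Problem \ref{RH2.4} is uniquely solvable despite the jump matrix degenerating at the endpoints — this is the standard hard-edge situation of finite-gap/soliton-gas theory and should be controllable by the usual local parametrix considerations. A secondary, purely bookkeeping difficulty is carrying the three reflections $w\mapsto\bar w$, $w\mapsto-\bar w$, $w\mapsto-w$ consistently so that the $\bar{\mathcal I}$-integral assembles, with the correct sign and conjugation, into the entry $\overline{r(\bar z)}\,e^{2i\theta(z)}$ of \eqref{eq:jumpp}; this is routine once the symmetry of $M^\infty$ is used throughout.
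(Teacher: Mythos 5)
Your proposal follows essentially the same route as the paper: Green's theorem converts the area integrals to boundary integrals, the Schwarz function's branch cut on $[i\eta_1,i\eta_2]$ lets you collapse $\partial\mathcal D$ onto the slit with $\delta S$ as the effective density, and the transformation \eqref{eq:transT} yields Problem \ref{RH2.4} on $\mathcal I\cup\bar{\mathcal I}$. The paper's proof is in fact far terser than yours (it simply asserts that $T$ moves the jump to the segments), so your added detail — in particular the discretisation in step (ii) identifying the line integral as the $N\to\infty$ limit of poles filling $\mathcal I$, and the attention to the endpoint behaviour of $\delta S$ — only strengthens the argument.
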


\begin{proof}
 By the process described in equation \eqref{eq:transT}, the function \( T(z) \) transforms the jump on \( \gamma \cup \bar{\gamma} \) to \( \mathcal{I} \cup \bar{\mathcal{I}} \), with both contours oriented upwards. As a result, \( T(z) \) is analytic in \( \mathbb{C} \setminus (\mathcal{I} \cup \bar{\mathcal{I}}) \). In this context, we consider the mKdV soliton shielding, which is characterized by a jump across the contours \( \mathcal{I} \) and \( \bar{\mathcal{I}} \). This implies that the problem on the domain can be reduced to a classical RH problem with discontinuities along a collection of arcs. This reduction is also the effect of soliton shielding because the effective soliton charge can be localized from a two-dimensional domain to a collection of arcs.
\end{proof}

	\section{Step-like oscillatory initial data}\label{secinitial}
In this section, we examine the scenario where the domain $\mathcal{D}$ is an elliptic region with foci located at $i\eta_1$ and $i\eta_2$ on the imaginary axis. For $t=0$, the initial condition $q(x,0)$, which is associated with the solution of the RH problem \eqref{RH2.4}, is observed to exhibit a step-like oscillatory behavior.

\subsection{Quiescent background}
We now concentrate on the RH problem \ref{RH2.4} at \( t = 0 \) as \( x \to -\infty \). In this limit, we observe that \( e^{-2i\theta(z)} = \mathcal{O}(e^{2\eta_1 x}) \). Consequently, the jump matrix \( J(z) \) defined in \eqref{eq:jumpp} fulfills the condition
\[
\|J(z) - I\| = \mathcal{O}(e^{-c|x|}),
\]
where \( c \) is a positive constant.
where the jumps are exponentially close to the identity matrix. The small norm theory guarantees that the solution is quiescent in such a domain \cite{IZ13,TA20}, which justifies the claim that the gas is initially supported on a right half-line. As a consequence, the solution of the RH problem \ref{RH21} satisfies
\begin{equation}\label{fuwuqiong}
  M(z) = I + \mathcal{O}(e^{-c_-|x|}),\quad c_->0.
\end{equation}
This implies that the potential \( q(x,0) = \mathcal{O}(e^{-c_-|x|}) \) as \( x \to -\infty \).

\subsection{Soliton part}
We now focus on the RH problem \ref{RH2.4} at \( t = 0 \) as \( x \to +\infty \). Two growing bands \( \mathcal{I} \cup \bar{\mathcal{I}} \) emerge from the endpoints \( \pm i\eta_1 \), on which the exponential terms are asymptotically large.

\subsubsection{The $g$-function and the $f$-function}
The first step in the asymptotic analysis of the RH problem \ref{RH2.4} is to construct scalar functions \( g(z) = g(x,t,z) \) and \( f(z) = f(x,t,z) \) that control the terms with exponential growth in the jumps \eqref{eq:jumpp}.
 We also introduce a two-sheeted Riemann surface \( \mathcal{R} \) of genus one associated with the multivalued function \( R(z) \), namely:
 \begin{equation}\label{eqR}
 	\mathcal{R}=\{(w,z)\in\mathbb{C}^2\,|\,w^2=R^2(z)=(z^2+\eta_1^2)(z^2+\eta_2^2)\}.
 \end{equation}
The first sheet of the surface is identified with the sheet where \( R(z) > 0 \) for \( \text{Im}\,z = 0 \). To construct a complex Riemann surface \( \mathcal{R} \), we add the two points at infinity, \( \infty^{+\infty} \) and \( \infty^{-\infty} \), where \( \infty^{+\infty} \) is on the first sheet and \( \infty^{-\infty} \) is on the second sheet.  We fix a canonical homology basis on \( \mathcal{R} \) by choosing \( b \) to encircle \( \mathcal{I} \) clockwise on the first sheet, and \( a \) to pass from the positive side of \( \bar{\mathcal{I}} \) to \( \mathcal{I} \) on sheet 1, and from the negative side of \( \mathcal{I} \) to \( \bar{\mathcal{I}} \) on the second sheet. See Figure \ref{FF3.1}.
 \begin{figure}[htbp]
 	\begin{center}
 		\begin{tikzpicture}[scale=0.65]
 			\draw[dashed] (-5,-3)--(-5,2);
 			\draw[dashed] (-2,-3)--(-2,2);
 			\draw[dashed](0,2)--(0,-3);
 			\draw[dashed](3,2)--(3,-3);
 			\draw(-6,3)--(6,3);
 			\draw(6,3)--(4,1);
 			\draw(4,1)--(-8,1);
 			\draw(-8,1)--(-6,3);
 			\draw(-6,-2)--(6,-2);
 			\draw(6,-2)--(4,-4);
 			\draw(4,-4)--(-8,-4);
 			\draw(-8,-4)--(-6,-2);
 			\draw (-5,2)--(-2,2);
 			\draw(0,2)--(3,2);
 			\draw(-5,-3)--(-2,-3);
 			\draw(0,-3)--(3,-3);
 			\draw[blue](1.7,2) ellipse(2.5 and 0.5);
 			\draw[blue][->](-0.8,2) arc(180:90:2.5 and 0.5);
 			\draw[blue][->](4.2,2) arc(360:270:2.5 and 0.5);
 			\draw[red](0.8,2)  arc(0:180:2 and 0.5) ;
 			\draw[red][->](-3.2,2)  arc(180:90:2 and 0.5) ;
 			\draw[dashed][red](0.8,-3)  arc(360:180:2 and 0.5) ;
 			\draw[dashed][red][->](0.8,-3)  arc(360:270:2 and 0.5) ;
 			
 			\node  [below]  at (-5.6,2.2 ) {\scalebox{0.7}{$-i\eta_2$}};
 			\node  [below]  at (-1.8,2.2 ) {\scalebox{0.7}{
 					$-i\eta_1$}};
 			\node  [below]  at (-0.25,2.3) {\scalebox{0.7}{$i\eta_1$}};
 			\node  [below]  at (3.4,2.3) {\scalebox{0.7}{$i\eta_2$}};
 			\node  [below]  at (-5.5,-2.8 ) {\scalebox{0.7}{$-i\eta_2$}};
 			\node  [below]  at (-1.8,-2.8 ) {\scalebox{0.7}{$-i\eta_1$}};
 			\node  [below]  at (0,-2.8) {\scalebox{0.7}{$i\eta_1$}};
 			\node  [below]  at (3.4,-2.8) {\scalebox{0.8}{$i\eta_2$}};
 			\node [thick] [below]  at (-1,3) {\scalebox{0.8}{$a$}};
 			\node [thick] [below]  at (3, 3) {\scalebox{0.8}{$b$}};
 			\node [thick] [below]  at (5,3) {\scalebox{0.8}{$\infty^+$}};
 			\node [thick] [below]  at (5,-2) {\scalebox{0.8}{$\infty^-$}};
 		\end{tikzpicture}
 \caption{Construction of the genus-one Riemann surface $\mathcal{R}$ and its basis of cycles.}

 \end{center}
 \label{FF3.1}
 \end{figure}
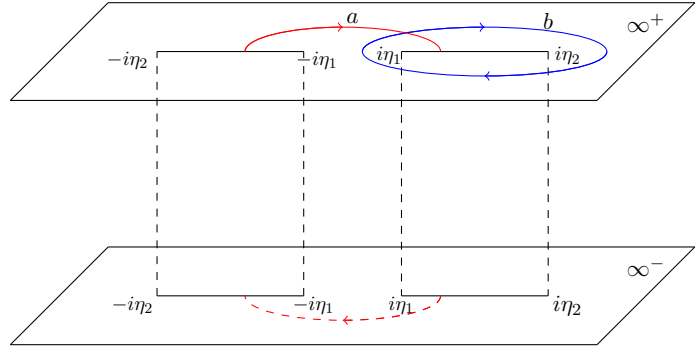

 On the genus-one Riemann surface $\mathcal{R}$ defined by \eqref{eqR} let
 \begin{equation}
 	\omega=\left(\oint_a\frac{dz}{R(z)}\right)^{-1}\frac{dz}{R(z)}=\frac{\eta_2}{4iK(m)}\frac{dz}{R(z)},\quad  m=\eta_1^2/\eta_2^2,
 \end{equation}
 where $K(m)=\int_0^1\frac{ds}{\sqrt{1-m\sin^2\theta}}$ is the complete elliptic integral of the first kind. We define the modulus
 \begin{equation}
 	\tau=\oint_b\omega=\frac{iK(1-m)}{2K(m)}.
 \end{equation}

 We recall the definition of the Jacobi elliptic function
 \begin{equation}
 	\theta(z,\tau)=\sum_{n\in\mathbb{Z}}e^{i\pi n^2\tau+2i\pi nz},\quad z\in\mathbb{C},
 \end{equation}
 which is an even function of $z$ and satisfies the periodicity conditions
 \begin{equation}
 	\theta(z+h+k\tau;\tau)=e^{-i\pi k^2\tau-2i\pi kz}\theta(z;\tau),\quad k,h\in\mathbb{Z},
 \end{equation}
 and has a simple zero at the half period $\frac{\tau}{2}+\frac{1}{2}$.

 Now we introduce the following new matrix function

\begin{equation}
	T^{(1)}(z)=e^{ig_0(x,t)\sigma_3}
    T(z)e^{-ig(z)\sigma_3}f(z)^{\sigma_3},
\end{equation}
where $g_0$ satifies \eqref{g0} and $\sigma_3=\begin{pmatrix}
	1&0\\
	0&-1\end{pmatrix}$, $g(z)$ and $f(z)$ defined by
\begin{align}\label{eq:3.6}
	&g(z)=\frac{R(z)}{2\pi i}\left(\int_{\mathcal{I}\,\cup\,\bar{\mathcal{I}}}\frac{-2xs-8ts^3}{R_+(s)}\frac{ds}{s-z}-\int_{-i\eta_1}^{i\eta_1}\frac{\Omega}{R(s)}\frac{ds}{s-z}\right)	
	,\\\label{eq:3.7}
	&f(z)=\exp\left\{\frac{R(z)}{2\pi i}\left(\int_{\mathcal{I}}\frac{-\log r(s)}{R_+(s)(s-z)}ds+\int_{\bar{\mathcal{I}}}\frac{\log \overline{r(\bar s)}}{R_+(s)(s-z)}d\zeta+\int_{-i\eta_1}^{i\eta_1}\frac{i\Delta}{R(s)(s-z)}ds\right)\right\},
\end{align}
where
\begin{align}
\Omega=\Omega(x,t)=\frac{\pi\eta_2}{K(m)}(x-2(\eta_1^2+\eta_2^2)t),\quad m=\eta_1^2/\eta_2^2,
\end{align}
and
\begin{equation}\label{delta}
	\Delta=-i\left(\int_0^{i\eta_1}\frac{ds}{R(s)}\right)^{-1}\left(\int_{\mathcal{I}}\frac{\ln r(s)}{R_+(s)}ds\right).
\end{equation}
\begin{proposition}\label{proposition}
The functions \( f(z) \) and \( g(z) \), defined in equations \eqref{eq:3.6} and \eqref{eq:3.7}, possess the following properties.
	\begin{enumerate}
		\item $g(z)$ is analytic in $\mathbb{C}\setminus[-i\eta_2,i\eta_2]$ and
		\begin{align*}
			&g_+(z)+g_-(z)=-8z^3t-2zx,\quad z\in\mathcal{I}\cup\bar{\mathcal{I}},\\
			&g_+(z)-g_-(z)=-\Omega(x,t),\qquad\,\,\,\, z\in(-i\eta_1,i\eta_1).
		\end{align*}
		
			\item $f(z)$ is analytic in $\mathbb{C}\setminus[-i\eta_2,i\eta_2]$ and
		\begin{align*}
			&f_+(z)f_-(z)=\frac{1}{r(z)},\quad z\in\mathcal{I},\\
				&f_+(z)f_-(z)=\overline{ r(\bar z)},\quad\, z\in\bar{\mathcal{I}},\\
			&\frac{f_+(z)}{f_-(z)}=e^{i\Delta},\qquad\quad\,\,\,\, z\in[-i\eta_1,i\eta_1].
		\end{align*}
		\item Schwarz symmetry
		\begin{equation}
			\overline{g(\bar z)}=g(z),\quad \overline{f(\bar z)}=f^{-1}(z),\quad f(-z)=f^{-1}(z)
		\end{equation}
		\item As $z\to\infty$, we have
		\begin{align}\label{g0}
			&g(z)=g_0(x,t)+\mathcal{O}(z^{-1}),\\
			&f(z)=1+\mathcal{O}(z^{-1})
		\end{align}
		\item Near the endpoints $z=\pm\eta_j,j=1,2$, we have
		\begin{equation}
	g(z)+4z^3t+zx=\mathcal{O}(z\pm i\eta_j)^{1/2},\quad z\to\pm i\eta_j.
		\end{equation}
	\end{enumerate}
\end{proposition}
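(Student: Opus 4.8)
The plan is to verify all five items directly from the Sokhotski–Plemelj formula together with the branch structure of $R$: on $\mathcal{I}\cup\bar{\mathcal{I}}$ one has $R_+=-R_-$, while on $(-i\eta_1,i\eta_1)$ and on $\mathbb{C}\setminus[-i\eta_2,i\eta_2]$ the function $R$ is single-valued and non-vanishing, $R(z)=z^2+\tfrac12(\eta_1^2+\eta_2^2)+\mathcal{O}(z^{-2})$ as $z\to\infty$, and $R$ has simple square-root zeros at $\pm i\eta_1,\pm i\eta_2$; the reflection symmetries $\overline{R(\bar z)}=R(z)$ and $R(-z)=R(z)$ hold because $R>0$ on $\mathbb{R}$ and $R$ is analytic across $\mathbb{R}$. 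Write $g=R\,G$ and $f=e^{R\,F}$ with $G,F$ the bracketed Cauchy integrals of \eqref{eq:3.6}, \eqref{eq:3.7}. For items 1 and 2: off $[-i\eta_2,i\eta_2]$ every factor is analytic, giving the analyticity claims; on $\mathcal{I}$ only the $\mathcal{I}\cup\bar{\mathcal{I}}$-integral crosses, and Plemelj gives $G_+-G_-=(-2xz-8tz^3)/R_+$ and $F_+-F_-=-\log r(z)/R_+$, so that $g_++g_-=R_+(G_+-G_-)=-8z^3t-2zx$ and $f_+f_-=e^{-\log r(z)}=1/r(z)$; the relations on $\bar{\mathcal{I}}$ are identical with the conjugated densities; on $(-i\eta_1,i\eta_1)$ the function $R$ is continuous and only the $[-i\eta_1,i\eta_1]$-integral crosses, with Plemelj jump $-\Omega/R$ (resp. $i\Delta/R$), whence $g_+-g_-=-\Omega$ and $f_+/f_-=e^{i\Delta}$.

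For item 3 I would argue by uniqueness of the scalar additive Riemann–Hilbert problem solved by $g$. The map $z\mapsto\bar z$ fixes $[-i\eta_2,i\eta_2]$, interchanges $\mathcal{I}$ and $\bar{\mathcal{I}}$ (reversing orientation), fixes $R$, and — since $x,t,\Omega,\Delta$ are real — sends each density to the one defining $g$; hence $\tilde g(z):=\overline{g(\bar z)}$ solves the same problem with the same mild endpoint growth as $g$, so $g-\tilde g$ extends to a bounded entire function equal to its value $g_0-\overline{g_0}$ at infinity, which is $0$ because $g_0$ is real (as one reads off its integral representation). Running the same scheme for $z\mapsto -z$, now using $R(-z)=R(z)$ and the reflection identities of $r=\delta S\,(\varpi_1-\overline{\varpi_1}+\varpi_2)$ inherited from the symmetry of the ellipse $\mathcal{D}$ about the imaginary axis (and analyticity of the $\varpi_\ell$), yields $\overline{f(\bar z)}=f^{-1}(z)$ and $f(-z)=f^{-1}(z)$; the last identity says precisely that $\psi:=\log f$ is an odd function of $z$.

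For item 4, insert $\tfrac{1}{s-z}=-\sum_{k\ge0}s^k/z^{k+1}$ into $G$ and multiply by $R(z)=z^2+\tfrac12(\eta_1^2+\eta_2^2)+\mathcal{O}(z^{-2})$ to get $g(z)=-\mu_0\,z-\mu_1+\mathcal{O}(z^{-1})$ with $\mu_k=\frac{1}{2\pi i}\big(\int_{\mathcal{I}\cup\bar{\mathcal{I}}}\tfrac{s^k(-2xs-8ts^3)}{R_+(s)}ds-\Omega\int_{-i\eta_1}^{i\eta_1}\tfrac{s^k}{R(s)}ds\big)$; boundedness at infinity forces $\mu_0=0$, which pins $\Omega$, and then $g_0=-\mu_1$. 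The content is that evaluating the complete elliptic integrals $\int_{-i\eta_1}^{i\eta_1}\tfrac{ds}{R(s)}$, $\int_{\mathcal{I}\cup\bar{\mathcal{I}}}\tfrac{s\,ds}{R_+(s)}$, $\int_{\mathcal{I}\cup\bar{\mathcal{I}}}\tfrac{s^3\,ds}{R_+(s)}$ in terms of $K(m)$, $m=\eta_1^2/\eta_2^2$, turns $\mu_0=0$ into exactly $\Omega=\tfrac{\pi\eta_2}{K(m)}(x-2(\eta_1^2+\eta_2^2)t)$, so that $g=g_0+\mathcal{O}(z^{-1})$ as in \eqref{g0}. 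The same expansion applied to $\psi=\log f=R\,F$ produces a leading term $-\tilde\mu_0\,z$ whose vanishing fixes $\Delta$; using the $r$-symmetries to rewrite the resulting numerator brings it to the form \eqref{delta}, and the $\mathcal{O}(1)$ term of $\psi$ vanishes automatically because $\psi$ is odd by item 3, so $f=1+\mathcal{O}(z^{-1})$. This elliptic-integral identification, together with the symmetry bookkeeping for $r$ that makes the residual constant in $\log f$ disappear, is the only non-routine step and the one I expect to be the main obstacle.

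Finally, for item 5 set $P(z)=-4z^3t-zx$, so item 1 gives $g_++g_-=2P$ on $\mathcal{I}\cup\bar{\mathcal{I}}$; hence $\phi:=g-P$ satisfies $\phi_+=-\phi_-$ there, so $\phi/R=G-P/R$ has no jump across $\mathcal{I}\cup\bar{\mathcal{I}}$. Near each $\pm i\eta_j$ the density of $G$ is $\mathcal{O}(R_+(s)^{-1})=\mathcal{O}(|s\mp i\eta_j|^{-1/2})$, so $G$ has at worst an inverse–square-root singularity there; since $\phi/R$ is single-valued near $\pm i\eta_j$ it cannot carry a genuine half-power singularity, hence it is bounded, and therefore $\phi=R\cdot(\text{analytic})=\mathcal{O}((z\mp i\eta_j)^{1/2})$, which is the asserted $g(z)+4z^3t+zx=\mathcal{O}((z\mp i\eta_j)^{1/2})$ (at $\pm i\eta_1$ one first removes the bounded contribution of the piecewise-constant jump of $g$ across $(-i\eta_1,i\eta_1)$, then runs the same argument).
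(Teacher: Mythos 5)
The paper states Proposition \ref{proposition} without any proof (the text passes directly from the statement to the definition of $\varphi$), so there is no argument of the authors' to measure yours against; your write-up supplies the standard verification and it is essentially sound. Items 1 and 2 follow exactly as you say from Plemelj together with $R_+=-R_-$ on $\mathcal{I}\cup\bar{\mathcal{I}}$ and continuity of $R$ on the gap, and your reduction of item 4 to the vanishing of the zeroth moment (which is equivalent to the stated formulas for $\Omega$ and $\Delta$ after evaluating the complete elliptic integrals) plus oddness of $\log f$ to kill the constant term is the right mechanism; you are also right that the elliptic-integral identification is the only genuinely computational step, and it does check out (e.g.\ $\int_{-i\eta_1}^{i\eta_1}ds/R(s)=2iK(m)/\eta_2$). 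Two points deserve tightening. First, the symmetries of $r(z)=\delta S(z)\bigl(\varpi_1(z)-\overline{\varpi_1(\bar z)}+\varpi_2(z)\bigr)$ under $z\mapsto\bar z$ and $z\mapsto-z$, which you need both for item 3 and to collapse the two band integrals into the single integral appearing in \eqref{delta}, are invoked but not verified; they must be extracted from the explicit form of $\delta S$ for an ellipse with foci on the imaginary axis and from assumptions on the $\varpi_\ell$, and the paper itself leaves this implicit, so you should state the required identity for $r$ explicitly as a hypothesis or derive it. Second, in item 5 your phrase ``single-valued, hence bounded'' should be ``single-valued and $o(|z\mp i\eta_j|^{-1})$, hence removable by Riemann's theorem''; and at $\pm i\eta_1$ the statement cannot hold in a full punctured neighborhood because $\varphi_+-\varphi_-=-\Omega\neq0$ across the gap, so your parenthetical fix should be made precise, for instance by working with $\varphi'$ (whose jump across the gap vanishes, so that $\varphi'/R$ extends analytically near $\pm i\eta_1$) and integrating to obtain the square-root behavior on each side separately.
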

Letting
\begin{equation}
	\varphi(z)=g(z)+zx+4z^3t,
\end{equation}
from Proposition \ref{proposition}, we have the following relations:
\begin{itemize}
	\item $\varphi_+(z)+\varphi_-(z)=0$ for $z\in\mathcal{I}\,\cup\,\bar{\mathcal{I}}$,
	\item   $\varphi_+(z)-\varphi_-(z)=-\Omega<0$ for $z\in [-i\eta_1,i\eta_1]$,
	\item $\varphi(z)=4tz^3+xz+\mathcal{O}(z^{-1})$ as $z\to\infty$.
\end{itemize}

Then we obtain the following Riemann-Hilbert problem for \( T^{(1)}(z) \):
\begin{problem}
	Find an analytic function $T^{(1)}(z):\mathbb{C}\setminus[-i\eta_2,i\eta_2]\to SL_2(\mathbb{C})$ with the following properties.
	
	\begin{enumerate}
		\item $ T^{(1)}(z)=I+\mathcal{O}(z^{-1})$ as $z\to\infty$.
		\item For $z\in [-i\eta_2,i\eta_2]$, the boundary values $T^{(1)}_{\pm}(z)$ satisfy the jump relation
		$$T^{(1)}_+(z)=T^{(1)}_-(z)V^{(1)}(z),$$
    where
\begin{equation}\label{eqV2}
	V^{(1)}=\begin{cases}
		\begin{pmatrix}
			\frac{e^{2i\varphi_-(z)}}{r(z)f^2_-(z)}&0\\
			1& 	\frac{e^{2i\varphi_+(z)}}{r(z)f^2_+(z)}
			 \end{pmatrix},\quad z\in\mathcal{I}\\[15pt]
		\begin{pmatrix}
			\frac{e^{-2i\varphi_+(z)}}{\overline{r(\bar z)}f^2_+(z)}&1\\
			0& 	\frac{e^{-2i\varphi_-(z)}}{\overline{r(\bar z)}f^2_-(z)}   \end{pmatrix},\quad z\in\bar{\mathcal{I}}\\[15pt]
		\begin{pmatrix}
			e^{i(\Omega+\Delta)}&0\\
			0&e^{-i(\Omega+\Delta)}    \end{pmatrix},\quad\,\,\, z\in[-i\eta_1,i\eta_1].
	\end{cases}
\end{equation}
	\end{enumerate}
\end{problem}
\subsubsection{Opening lenses in the support of the soliton gas}
We introduce one further transformation which ``opening  lenses'' away from the $\mathcal{I} \cup \bar{\mathcal{I}}$, having the effect of deforming the oscillatory jumps \eqref{eqV2} onto new contours where they are exponentially decaying.

We start by defining the analytic continuation $\hat{r}(z)$ of the function $r(z)$ off the interval $\mathcal{I} \cup \bar{\mathcal{I}}$ with the requirement that
$$
\hat{r}_\pm(z) = \pm r(z), \quad z \in \mathcal{I} \cup \bar{\mathcal{I}},
$$
 Consequently, we introduce the open lenses as depicted in Figure \ref{lense}.
 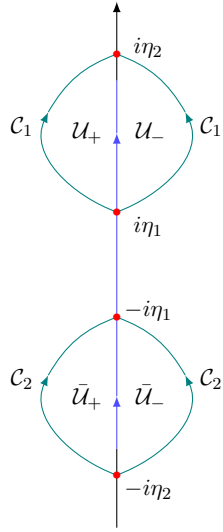
\begin{figure}[htp]
	\begin{center}
		\begin{tikzpicture}[scale=0.7]
			\draw(0,-5.5)--(0,-4);
				\draw[blue!70][-latex](0,-4)--(0,-3);
			\draw[blue!70][-latex](0,-3)--(0,2);
			\draw[blue!70](0,2)--(0,3);
			\draw[-latex](0,3)--(0,4.5);
			\draw  [teal][-latex](0,-4.5) to [out=20,in=-65](1.3,-2.6);
			\draw [teal](1.3,-2.6) to [out=120,in=-20](0,-1.5);
			\draw[teal] [-latex](0,-4.5) to [out=160,in=-115](-1.3,-2.6);
			\draw  [teal](-1.3,-2.6) to [out=60,in=-160](0,-1.5);
			\draw  [teal][-latex](0,0.5) to [out=20,in=-65](1.3,2.4);
			\draw [teal] (1.3,2.4) to [out=120,in=-20](0,3.5);
			\draw  [teal][-latex](0,0.5) to [out=160,in=-115](-1.3,2.4);
			\draw  [teal](-1.3,2.4) to [out=60,in=-160](0,3.5);
			\draw (0.6,-4.4)node[below]{\scalebox{0.8}{$-i\eta_2$}};
			\draw(0.6,-1)node[below]{\scalebox{0.8}{$-i\eta_1$}};
			\draw(0.6,4)node[below]{\scalebox{0.8}{$i\eta_2$}};
			\draw(0.6,0.6)node[below]{\scalebox{0.8}{$i\eta_1$}};
			\draw(1.8,2.5)node[below]{\scalebox{0.8}{$\mathcal{C}_1$}};
			\draw(1.8,-2.3)node[below]{\scalebox{0.8}{$\mathcal{C}_2$}};
			\draw(-1.8,2.5)node[below]{\scalebox{0.8}{$\mathcal{C}_1$}};
			\draw(-1.8,-2.3)node[below]{\scalebox{0.8}{$\mathcal{C}_2$}};
			\draw(-0.55,2.4)node[below]{\scalebox{0.8}{$\mathcal{U}_+$}};
			\draw(0.65,2.4)node[below]{\scalebox{0.8}{$\mathcal{U}_-$}};
			\draw(-0.55,-2.6)node[below]{\scalebox{0.8}{$\bar{\mathcal{U}}_+$}};
			\draw(0.65,-2.6)node[below]{\scalebox{0.8}{$\bar{\mathcal{U}}_-$}};
				\node [red]at(0,-4.5) {$\scriptscriptstyle\bullet$};
			\node [red]at(0,-1.5) {$\scriptscriptstyle\bullet$};
			\node [red]at(0,0.5) {$\scriptscriptstyle\bullet$};
			\node [red]at(0,3.5) {$\scriptscriptstyle\bullet$};
		\end{tikzpicture}
	\end{center}
	\caption{As $x \to +\infty$, the lenses $\mathcal{U}_\pm \cup \bar{\mathcal{U}}_\pm$ and the jump matrix $V^{(2)}(z)$ tend to the identity matrix on the lens contours $\mathcal{C}_j$, for $j=1,2$, which are depicted by the green lines. The jump matrices on $\mathcal{I} \cup \bar{\mathcal{I}}$ coincide with those of the outer model problem.}
	\label{lense}
\end{figure}	

We factorize the jump matrices \eqref{eqV2}
for $z \in \mathcal{I}$, we have
$$
V^{(1)}(z)=\begin{pmatrix}
1 & \frac{e^{2i\varphi_-(x,t,z)}}{r(z)f^2_-(z)}\\
0 & 1
\end{pmatrix}\begin{pmatrix}
0 &- 1 \\
1 & 0
\end{pmatrix}\begin{pmatrix}
1 &  \frac{e^{2i\varphi_+(x,t,z)}}{r(z)f^2_+(z)}\\
0& 1
\end{pmatrix},
$$
for $z \in \bar{\mathcal{I}}$, we obtain
$$
V^{(1)}(z)=\begin{pmatrix}
1 &0\\
 \frac{e^{-2i\varphi_-(x,t,z)}}{\overline{r(\bar z)}f^2_-(z)} & 1
\end{pmatrix}\begin{pmatrix}
0 & 1 \\
-1 & 0
\end{pmatrix}\begin{pmatrix}
1 &0 \\
 \frac{e^{-2i\varphi_+(x,t,z)}}{\overline{r(\bar z)}f^2_+(z)}& 1
\end{pmatrix}.
$$
We introduce a further transformation of the problem:

\begin{equation}\label{G(z)trans}
	T^{(2)}(z, x)=T^{(1)}(z, x) G(z, x)
\end{equation}

where
$$
G(z,x)= \begin{cases}\begin{pmatrix}
		1 & \frac{e^{2i\varphi(z)}}{\hat r(z)f^2(z)}\\
		0 & 1
	\end{pmatrix}, z\in  \mathcal{U}_+,
\begin{pmatrix}
	1 &  -\frac{e^{2i\varphi(z)}}{\hat r(z)f^2(z)}\\
	0& 1
\end{pmatrix}, z\in\mathcal{U}_-
\\[15pt]
\begin{pmatrix}
	1 &0\\
	\frac{e^{-2i\varphi(z)}}{\overline{\hat r(\bar z)}f^2(z)} & 1
\end{pmatrix},  z\in  \bar{\mathcal{U}}_+
\begin{pmatrix}
	1 &0\\
-	\frac{e^{-2i\varphi(z)}}{\overline{\hat r(\bar z)}f^2(z)} & 1
\end{pmatrix}\,  z\in  \bar{\mathcal{U}}_- \\
I,  \text { elsewhere. }\end{cases}
$$

It follows easily that $T^{(2)}(z)$ satisfies the following RH problem.
\begin{problem}
	Find an analytic function $T^{(2)}(z):\mathbb{C}\setminus([-i\eta_2,i\eta_2] \cup \mathcal{C}_1 \cup \mathcal{C}_2)\to SL_2(\mathbb{C})$ with the following properties.
	
	\begin{enumerate}
		\item $ T^{(2)}(z)=I+\mathcal{O}(z^{-1})$ as $z\to\infty$.
		\item For $z\in [-i\eta_2,i\eta_2] \cup \mathcal{C}_1 \cup \mathcal{C}_2$, the boundary values $T^{(2)}_{\pm}(z)$ satisfy the jump relation
		$$T^{(2)}_+(z)=T^{(2)}_-(z)V^{(2)}(z),$$
where
\begin{equation}\label{VV2}
V^{(2)}(z ; x)= \begin{cases}\begin{pmatrix}
0 & -1 \\
1& 0
\end{pmatrix}, z \in \mathcal{I} ,\,\,\,\,
\begin{pmatrix}
	1 &0\\
	\frac{e^{-2i\varphi(x,t,z)}}{\overline{\hat r(\bar z)}f^2(z)} & 1
\end{pmatrix}, z \in \mathcal{C}_2;\\[15pt]
\begin{pmatrix}
	0 & 1 \\
	-1 & 0
\end{pmatrix} ,z \in \bar{\mathcal{I}}, \,\,\,\,\begin{pmatrix}
1 & \frac{e^{2i\varphi(z)}}{\hat r(z)f^2(z)}\\
0& 1
\end{pmatrix},\hspace{0.3cm} z \in \mathcal{C}_1,
\\[15pt]
\begin{pmatrix}
e^{i(\Omega+\Delta)} & 0 \\
0 & e^{-i(\Omega+\Delta)}
\end{pmatrix},\hspace{0.2cm}z \in\left[-i \eta_1 ; i \eta_1\right].
\end{cases}
\end{equation}
	\end{enumerate}
\end{problem}
This ensures that the $z$-dependent oscillatory component is confined solely to the outer boundary of the lenses. We now proceed to investigate the sign of \(\operatorname{Im}\,\varphi(z)\) in the vicinity of the lenses.
\begin{lemma}\label{lemma31}
	The function $\varphi(z)$ satisfy the following inequalities:
\begin{align*}
		& \operatorname{Im}\,\varphi(z)>ct,\quad z\in L_1\setminus\{i\eta_1,\eta_2\},\\
		& \operatorname{Im}\,\varphi(z)<-ct ,\quad z\in L_2\setminus\{-i\eta_1,-\eta_2\}.
	\end{align*}

\end{lemma}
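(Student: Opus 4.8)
The plan is to analyze the sign of $\operatorname{Im}\varphi(z)$ directly from the boundary behavior of $\varphi$ established in Proposition~\ref{proposition}, viewing $\operatorname{Im}\varphi$ as a harmonic function on the cut plane and reading off its sign near $\mathcal{I}\cup\bar{\mathcal{I}}$ from the known jump relations. First I would record the structural facts: $\varphi$ is analytic in $\mathbb{C}\setminus[-i\eta_2,i\eta_2]$, it satisfies $\varphi_+(z)+\varphi_-(z)=0$ on $\mathcal{I}\cup\bar{\mathcal{I}}$ (so $\operatorname{Re}\varphi_\pm=0$ there, i.e. $\varphi$ is purely imaginary on the bands and changes sign across them), $\varphi_+(z)-\varphi_-(z)=-\Omega$ on $(-i\eta_1,i\eta_1)$ with $\Omega>0$ for $x\to+\infty$ (since $\Omega=\tfrac{\pi\eta_2}{K(m)}(x-2(\eta_1^2+\eta_2^2)t)$), and near the endpoints $\varphi(z)=\mathcal O((z\mp i\eta_j)^{1/2})$. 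From these one sees that along $\mathcal{I}$ the quantity $\operatorname{Im}\varphi$ has a definite sign on each side: writing $\varphi_\pm = \pm \tfrac12(\varphi_+-\varphi_-)$ plus the symmetric part which vanishes, the jump data force $\operatorname{Im}\varphi$ to be, say, positive immediately to one side of $\mathcal{I}$ and negative to the other. The lens contours $\mathcal{C}_1$ (resp.\ $\mathcal{C}_2$) are then chosen to lie in the region where $\operatorname{Im}\varphi>0$ (resp.\ $<0$), which is exactly the statement $L_1, L_2$ of the lemma; here I would use the notation of the paper identifying $L_1=\mathcal{C}_1$ and $L_2=\mathcal{C}_2$.

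Second, I would make the sign statement quantitative in $t$. The point is that $\varphi(z)=4tz^3+xz+g(z)$ and, using the representation \eqref{eq:3.6} for $g$, one can extract the explicit leading contribution: on the imaginary axis $z=is$ the dominant term is the polynomial part modified by the $g$-function so that $\varphi$ is purely imaginary on $[-i\eta_2,i\eta_2]$ and its imaginary part, as a function of the (real) parameter along the contour and of $(x,t)$, is a smooth function whose $t$-derivative is bounded below. Concretely I would compute $\partial_t \operatorname{Im}\varphi$ along a fixed lens contour: since $g(z)$ depends on $(x,t)$ only through the linear combinations appearing in $\Omega$ and the polynomial $-2xs-8ts^3$ inside the integral, $\partial_t\varphi$ is itself an explicit algebraic function (essentially $4z^3$ corrected by an Abelian-integral term that is bounded on the lens), and one checks $\partial_t\operatorname{Im}\varphi \ge c>0$ on $\mathcal{C}_1$ away from the endpoints, uniformly. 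Combined with the fact that at $t=0$ (or after subtracting the $t$-independent part) $\operatorname{Im}\varphi$ already has the correct strict sign on the lens interior — which follows from the maximum principle applied to the harmonic function $\operatorname{Im}\varphi$ on a domain bounded by the bands (where it vanishes in the limit as the lens collapses onto $\mathcal{I}$) — this yields $\operatorname{Im}\varphi(z)>ct$ on $L_1\setminus\{i\eta_1,i\eta_2\}$ and $<-ct$ on $L_2\setminus\{-i\eta_1,-i\eta_2\}$.

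Third, I would handle the endpoint behavior carefully, since the inequalities are claimed only away from $\pm i\eta_1,\pm i\eta_2$: near $i\eta_2$ the function $\varphi$ vanishes like a square root, so $\operatorname{Im}\varphi\to 0$ and the bound $\operatorname{Im}\varphi>ct$ cannot hold in a full neighborhood — but on a lens contour that approaches $i\eta_2$ transversally from the correct side, $\operatorname{Im}\varphi$ behaves like $c|z-i\eta_2|^{1/2}>0$ with the right sign, so the lemma is stated with the endpoints excised precisely to accommodate this, and the factor $t$ can be absorbed by shrinking the constant on any compact subcontour. Near the interior endpoints $\pm i\eta_1$ the jump $\varphi_+-\varphi_-=-\Omega$ and the square-root vanishing together pin the sign on each side; a local analysis using the explicit local parametrix coordinate $(z\mp i\eta_1)^{1/2}$ confirms it. The main obstacle I anticipate is the global sign control: verifying that the lens contours $\mathcal{C}_1,\mathcal{C}_2$ can actually be drawn entirely inside the region $\{\operatorname{Im}\varphi>0\}$ (resp.\ $\{<0\}$) requires understanding the full topology of the zero level set of the harmonic function $\operatorname{Im}\varphi$ — i.e.\ tracing the ``Stokes lines'' emanating from the branch points $\pm i\eta_1,\pm i\eta_2$ and checking they do not obstruct a contour connecting $i\eta_1$ to $i\eta_2$; this is where one must use the specific sign of $\Omega$ as $x\to+\infty$ and the genus-one structure of $\mathcal{R}$, and it is the step I would spend the most care on, likely by an explicit computation of $\operatorname{Im}\varphi$ in terms of the incomplete elliptic integral appearing in $g$.
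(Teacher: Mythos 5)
Your overall framework (reading the sign of the harmonic function $\operatorname{Im}\varphi$ from the boundary data of Proposition \ref{proposition}) is a legitimate way to think about lens estimates, but as written the proposal contains an internal inconsistency and leaves open precisely the step that constitutes the lemma. First, from $\varphi_++\varphi_-=0$ on $\mathcal{I}\cup\bar{\mathcal{I}}$ you conclude $\operatorname{Re}\varphi_\pm=0$ there; that inference is not valid ($\varphi_+=-\varphi_-$ is equally consistent with real boundary values), and the property actually needed for the steepest-descent scheme is the opposite one, namely $\operatorname{Im}\varphi_\pm=0$ on the bands, so that $e^{2i\varphi_\pm}$ is oscillatory on $\mathcal{I}$ and $\operatorname{Im}\varphi$ changes sign across it. Two sentences later you switch to using that ``$\operatorname{Im}\varphi$ vanishes as the lens collapses onto $\mathcal{I}$'', which contradicts your first claim. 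Second, and more seriously, even granting $\operatorname{Im}\varphi=0$ on the bands, the maximum principle cannot by itself decide which side of $\mathcal{I}$ is the positive side: one needs the direction of increase of $\varphi$ along $\mathcal{I}$ (equivalently, via Cauchy--Riemann, the sign of the normal derivative of $\operatorname{Im}\varphi$ across the band). You explicitly defer this global sign control to ``the step I would spend the most care on''; that deferred step is the entire content of the lemma, so the proposal does not yet prove it.

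The paper closes exactly this gap with a short explicit computation rather than a level-set or $t$-derivative argument: applying the residue theorem to the representation \eqref{eq:3.6}, $\varphi$ is rewritten as a contour integral plus the $\Omega$-term, and differentiation yields the closed algebraic form
\[
\varphi'(z)=\frac{12t\,(z^2+\eta_2^2)(z^2+v^2)}{R(z)},\qquad v\in(0,\eta_1),
\]
from which the signs $\varphi'(z)<0$ on $(i\eta_2,i\infty)$ and $i\varphi'(z)>0$ on $\mathcal{I}$ are read off directly; the stated inequalities on the lens contours $\mathcal{C}_1,\mathcal{C}_2$ (your $L_1,L_2$) then follow from a local Cauchy--Riemann expansion off the band together with the square-root vanishing at the endpoints, which is the one part of your plan that is described correctly. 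Your proposed bound $\partial_t\operatorname{Im}\varphi\ge c>0$ is neither established nor matched to the regime in which the lemma is used ($t=0$, $x\to+\infty$); if you wish to keep your structure, replace both the $t$-derivative step and the Stokes-line discussion by the explicit formula for $\varphi'$ above, and correct the real/imaginary confusion on the bands.
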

\begin{proof}
Using the representation of \( g(z) \) given by equation \eqref{eq:3.6}, for a fixed \( s_0 = \max\{|z|, \eta_2\} \), we apply the residue theorem to express \(\varphi(z)\) as follows:
\[
\varphi(z) = \frac{R(z)}{2\pi i} \left( \oint_{|s|=s_0} \frac{xs + 4ts^3}{R(s)} \frac{ds}{s-z} - \int_{-i\eta_1}^{i\eta_1} \frac{\Omega}{R(s)} \frac{ds}{s-z} \right)
\]
where the contour integral is positively oriented. From this expression, we derive the derivative of \(\varphi(z)\):
\[
\varphi'(z) = \frac{12t(z^2 + \eta_2^2)(z^2 + v^2)}{R(z)}, \quad v \in (0, \eta_1)
\]
which leads to the following inequalities:
\begin{align*}
	& \varphi'(z) < 0, \quad z \in (i\eta_2, i\infty) \\
	& i\varphi'(z) > 0, \quad z \in \mathcal{I}
\end{align*}
The results then follow immediately.
\end{proof}

\subsubsection{The outer model problem}
Lemma \ref{lemma31} guarantees that the off-diagonal entries of the jump matrices \eqref{VV2} along the upper and lower lenses are exponentially small in the region as \( x \to +\infty \). Consequently, these jump matrices are asymptotically close to the identity outside small neighborhoods of \( \pm i\eta_1 \) and \( \pm i\eta_2 \). Now, we arrive at the model RH problem.

\begin{problem}\label{RH3.3}
	Find an analytic function $S(z):\mathbb{C}\setminus [-i\eta_2,i\eta_2] \to SL_2(\mathbb{C})$ with the following properties.
	
	\begin{enumerate}
		\item $ S(z)=I+\mathcal{O}(z^{-1})$ as $z\to\infty$.
		\item For $z\in  [-i\eta_2,i\eta_2] $, the boundary values $S_{\pm}(z)$ satisfy the jump relation
		$$S_+(z)=S_-(z)V_S,$$
		
		where
		$$
	V_S(z ; x)= \begin{cases}\begin{pmatrix}
				0 & -1 \\
				1 & 0
		\end{pmatrix},\, z \in \mathcal{I}, \,\,\,\,
		\begin{pmatrix}
			0 & 1 \\
			-1 & 0
		\end{pmatrix},\, \, z \in  \bar{\mathcal{I}} ,\\[15pt]
		\begin{pmatrix}
				e^{i (\Omega+\Delta)} & 0 \\
				0 & e^{-i(\Omega+\Delta)}
		\end{pmatrix},\,  z \in\left[-i \eta_1 ; i \eta_1\right].\end{cases}
		$$
	\end{enumerate}
\end{problem}

	Then, we define the integral
	\begin{equation}
		A(z)=\int_{i\eta_2}^z\omega,\quad z\in\mathbb{C}\setminus[-i\eta_2,i\eta_2],
	\end{equation}
	and we observe that
	\begin{align*}
		&A(+\infty)=\frac{1}{4},\hspace{1.5cm} A_+(i\eta_1)=-\frac{\tau}{2},\\ &
		A_+(-i\eta_2)=-\frac{1}{2},\hspace{0.8cm}
		A_+(-i\eta_1)=-\frac{1}{2}-\frac{\tau}{2},
	\end{align*}
	and
	\begin{align*}
		&A_+(z)+A_-(z)=0,\qquad z\in\mathcal{I},\\
		&A_+(z)+A_-(z)=-1,\quad\,\,z\in\bar{\mathcal{I}},\\
		&A_+(z)-A_-(z)=-\tau,\quad\,\,z\in[-i\eta_2,i\eta_2].
	\end{align*}
	Next we introduce the quantity
	\begin{equation}
		\gamma(z) = \sqrt[4]{\frac{z^2 + \eta_1^2}{z^2 + \eta_2^2}}, \quad  z \in \mathbb{C} \setminus (\mathcal{I} \cup \bar{\mathcal{I}}),
	\end{equation}
and normalized such that $\gamma(z)\to1$ as $z\to\infty$. Then
\begin{align}
	&\gamma_+(z)=-i\gamma_-(z),\quad\, z\in\mathcal{I},\\
	&\gamma_+(z)=i\gamma_-(z),\qquad z\in\bar{\mathcal{I}}.
\end{align}

We can show that the solution of the model RH problem \eqref{RH3.3} has the following form
\begin{align}
	S(z)=\frac{1}{2}\frac{\theta(0;\tau)}{\theta(\frac{\Omega+\Delta}{2\pi};\tau)}\begin{pmatrix}
		\frac{\theta(A(z)+\frac{1}{4}+\varpi;\tau)}{\theta(A(z)+\frac{1}{4};\tau)}\Gamma^+(z)&	-i\frac{\theta(-A(z)+\frac{1}{4}+\varpi;\tau)}{\theta(-A(z)+\frac{1}{4};\tau)}\Gamma^-(z)\\
		i	\frac{\theta(A(z)-\frac{1}{4}+\varpi;\tau)}{\theta(A(z)-\frac{1}{4};\tau)}\Gamma^-(z)&	\frac{\theta(-A(z)-\frac{1}{4}+\varpi;\tau)}{\theta(-A(z)-\frac{1}{4};\tau)}\Gamma^+(z)
	\end{pmatrix}
\end{align}
where $\varpi=\frac{\Omega+\Delta}{2\pi}, \Gamma^+(z)=\gamma(z)+\frac{1}{\gamma(z)}$ and $\Gamma^-(z)=\gamma(z)-\frac{1}{\gamma(z)}$.

\subsubsection{The local models at the endpoints}
In this section, we analyze the error matrices associated with the endpoints $\pm i\eta_1$ and $\pm i\eta_2$ of the intervals $\mathcal{I}$ and $\bar{\mathcal{I}}$, employing a nonlinear steepest descent analysis.

Initially, we make certain assumptions regarding the behavior of \( r(z) \) in the vicinity of the points \( \pm i\eta_1, \pm i\eta_2 \). Indeed, in \cite{MT23}, the mKdV equation was studied in conjunction with a RH problem \ref{RH2.4}. It was demonstrated that if the function $r(z)$ exhibits local behavior near the endpoints $\pm i\eta_1,\pm i\eta_2$ of the form $r(z) \sim |z \pm i\eta_j|^{\pm1/2} \tilde r(z)$ for $j=1,2$, where $\tilde r(z)$ is an analytic function that is locally bounded and non-zero in a neighborhood of the endpoints, then it is feasible to adjust the lenses of the opening factorization such that the error matrices converge exponentially to the identity matrix uniformly for $z \in \mathbb{C}$.

Specifically, following our initial assumptions, we proceed to provide the analytic continuation of \( r(z) \) away from \( \mathcal{I} \).

\begin{assumption}
When \( r(z) |z - i\eta_j|^{\pm1/2} \) for \( j = 1, 2 \) is bounded and non-zero on \( \mathcal{I} \), we assume the existence of an analytic continuation \( \hat{r}(z) \) of \( r(z) \) off \( \mathcal{I} \) that satisfies the following conditions:
	\begin{enumerate}
		\item The function \( \hat{r}(z) \) is analytic in \( \mathcal{U}^h_\pm \) and satisfies \( \hat{r}(z)|_{z \in [i\eta_1, i\eta_2]} = r(z) \), where \( \mathcal{U}^h_\pm \) is depicted in Figure \ref{fixlense}.
		\item For \( z \in [i\eta_2, i(\eta_2 + h)] \cup [i(\eta_1 - h), i\eta_1] \), we have \( \hat{r}_+(z) + \hat{r}_-(z) = 0 \).
	\end{enumerate}
\end{assumption}
In these specific cases, we observe that \( r(z) = \delta S(z) \eta^2(z) \), where \( \eta^2(z) \) is bounded within the original domain \( \mathcal{D} \), and \( \delta S(z) \) is defined in \( \mathcal{I} \) with behavior at the endpoint of the form \( \delta S(z) \sim |z - i\eta_j|^{1/2} \). Under this assumption, upon applying the transformation given by equation \eqref{G(z)trans}, we identify jumps within the intervals \( [i\eta_2, i(\eta_2 + h)] \) and \( [i(\eta_1 - h), i\eta_1] \), as depicted in Figure \ref{fixlense}.
 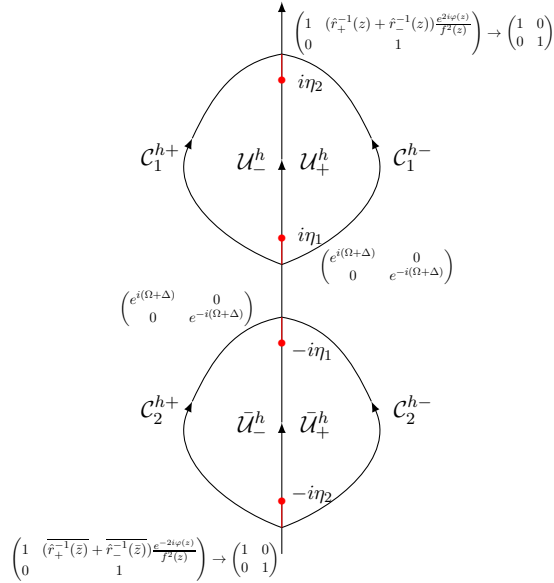
\begin{figure}[htp]
	\begin{center}
		\begin{tikzpicture}[scale=0.7]
			\draw(0,-5.5)--(0,-4.5);
			\draw(0,-1.5)--(0,0.5);
			\draw[-latex](0,3.5)--(0,5);
			\draw[-latex](0,-4.5)--(0,-3);
		    \draw(0,-3)--(0,-1.5);
		    \draw[-latex](0,0.5)--(0,2);
		    \draw(0,2)--(0,3.5);
		
			\draw [-latex](0,-5) to [out=20,in=-65](1.7,-2.6);
			\draw (1.7,-2.6) to [out=115,in=-10](0,-1);
			\draw [-latex](0,-5) to [out=160,in=-115](-1.7,-2.6);
			\draw (-1.7,-2.6) to [out=65,in=-170](0,-1);
			
			\draw [-latex](0,0) to [out=20,in=-65](1.7,2.4);
			\draw (1.7,2.4) to [out=115,in=-10](0,4);
			\draw [-latex](0,0) to [out=160,in=-115](-1.7,2.4);
			\draw (-1.7,2.4) to [out=65,in=-170](0,4);
			
			\draw[red](0,0)--(0,0.5);
			\draw[red](0,3.5)--(0,4);
			
				\draw[red](0,-5)--(0,-4.5);
			\draw[red](0,-1)--(0,-1.5);
			
			\draw(0.58,-4)node[below]{\scalebox{0.7}{$-i\eta_2$}};
			\draw(0.58,-1.3)node[below]{\scalebox{0.7}{$-i\eta_1$}};
			\draw(0.55,3.8)node[below]{\scalebox{0.7}{$i\eta_2$}};
			\draw(0.55,0.9)node[below]{\scalebox{0.7}{$i\eta_1$}};
			\draw(2.5,2.5)node[below]{\scalebox{0.8}{$\mathcal{C}^{h-}_1$}};
			\draw(2.5,-2.3)node[below]{\scalebox{0.8}{$\mathcal{C}^{h-}_2$}};
			\draw(-2.3,2.5)node[below]{\scalebox{0.8}{$\mathcal{C}^{h+}_1$}};
			\draw(-2.3,-2.3)node[below]{\scalebox{0.8}{$\mathcal{C}^{h+}_2$}};
			\draw(-0.55,2.4)node[below]{\scalebox{0.8}{$\mathcal{U}^h_-$}};
			\draw(0.65,2.4)node[below]{\scalebox{0.8}{$\mathcal{U}^h_+$}};
			\draw(-0.55,-2.6)node[below]{\scalebox{0.8}{$\bar{\mathcal{U}}^h_-$}};
			\draw(0.65,-2.6)node[below]{\scalebox{0.8}{$\bar{\mathcal{U}}^h_+$}};
			\node [red]at(0,-4.5) {$\scriptscriptstyle\bullet$};
			\node [red]at(0,-1.5) {$\scriptscriptstyle\bullet$};
			\node [red]at(0,0.5) {$\scriptscriptstyle\bullet$};
			\node [red]at(0,3.5) {$\scriptscriptstyle\bullet$};
				\draw(2,0.5)node[below]{\scalebox{0.5}{$\begin{pmatrix}
							e^{i(\Omega+\Delta)}&0\\
							0&e^{-i(\Omega+\Delta)}
						\end{pmatrix}$}};
				\draw(2.7,5)node[below]{\scalebox{0.5}{$\begin{pmatrix}
						1&(\hat r^{-1}_+(z)+\hat r^{-1}_-(z))\frac{e^{2i\varphi(z)}}{f^2(z)}\\
						0&1
					\end{pmatrix}\to\begin{pmatrix}
					1&0\\
					0&1
					\end{pmatrix}$}};
		\draw(-1.8,-0.3)node[below]{\scalebox{0.5}{$\begin{pmatrix}
							e^{i(\Omega+\Delta)}&0\\
							0&e^{-i(\Omega+\Delta)}
						\end{pmatrix}$}};
				\draw(-2.6,-5)node[below]{\scalebox{0.5}{$\begin{pmatrix}
							1&(\overline{\hat r^{-1}_+(\bar z)}+\overline{\hat r^{-1}_-(\bar z)})\frac{e^{-2i\varphi(z)}}{f^2(z)}\\
							0&1
						\end{pmatrix}\to\begin{pmatrix}
							1&0\\
							0&1
						\end{pmatrix}$}};
		\end{tikzpicture}
	\end{center}
	\caption{The new lenses $\mathcal{U}^h_\pm  \cup \bar{\mathcal{U}}^h_\pm$  and the analytic continuation jump matrix across the continuous domain (red lines) are considered. }
	\label{fixlense}
\end{figure}

We now define the Error matrix
\begin{equation}
	\mathcal{E}(z)=T^{(2)}(z)S^{-1}(z),
\end{equation}
which is analytic for $z\in\mathbb{C}\setminus(\partial(\mathcal{C}_1^{h-}\cup\mathcal{C}_1^{h+}\cup\mathcal{C}_2^{h-}\cup\mathcal{C}_2^{h+}))$ and it has the jump condition
\begin{equation}
	\mathcal{E}_+(z)=\mathcal{E}_-(z)V_{\mathcal{E}}(z),
\end{equation}
where
\begin{equation}
	V_{\mathcal{E}}(z)=V_s(z)V^{(2)}(z)V_s^{-1}(z)
\end{equation}
Since $S(z)$ is bounded in $x$, the jump matrices tend to the identity exponentially fast with respect to the matrix norm. From the small norm lemma, we have that the matrix $\mathcal{E}(z)$, uniformly in $z\in\mathbb{C}$, tends exponentially to the identity as $x\to+\infty$,
\begin{equation}
	\mathcal{E}(z)=I+\mathcal{O}(e^{-c_+x}),\quad c_+>0.
\end{equation}
 It follows that the model problem $S(z)$ coincides with $T^{(2)}(z)$ up to an exponentially small error as $x\to+\infty$.

 Now we provide the initial datum that is step-like oscillatory in the elliptical domain.

			\begin{theorem}\label{th33}
Let the ellipse domain \( \mathcal{D} \) be defined as in equation \(\eqref{eq:ellipse}\).
Then the solution of the RH problem  \ref{RH2.4}  generates an initial datum $q(x,0)$ of the mkdV equation \eqref{eq:mkdc} that is step-like oscillatory with the following behaviours at $x\to\pm\infty:$
				\begin{enumerate}
					\item As \( x \to -\infty \), \( q(x,0) \) decays as
					\[ q(x,0) = \mathcal{O}(e^{-c_-|x|}). \]
					with a positive constant \( c_- \).
					
					\item As \( x \to +\infty \), \( q(x,0) \) is given by
					\[ q(x,0) = -(\eta_2 + \eta_1) {\rm dn}\left((\eta_2 + \eta_1)(x - 2(\eta_1^2 + \eta_2^2)t - x_0); m_1\right) + \mathcal{O}(e^{c_+x}). \]
					with a positive constant \( c_+ \). The Jacobi elliptic function \( \text{dn}(z, m_1) \) has a modulus of \( m_1 \), where \( m_1 \) and \( x_0 \) are specified in equation \eqref{parameter}.
				\end{enumerate}
			\end{theorem}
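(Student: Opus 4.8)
The plan is to handle the two limits separately, each reducing to machinery already set up in Section \ref{secinitial}.

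For $x\to-\infty$ (claim (1)) nothing beyond the ``quiescent background'' argument is needed: at $t=0$ the exponential $e^{-2i\theta(z)}$ is $\mathcal{O}(e^{2\eta_1 x})$ along $\mathcal{I}$ and symmetrically along $\bar{\mathcal{I}}$, so $\|J(z)-I\|=\mathcal{O}(e^{-c_-|x|})$ and the RH problem \ref{RH2.4} is a small-norm problem with $T(z)=I+\mathcal{O}(e^{-c_-|x|})$; since $T$ coincides with $M^\infty$ (and hence with $M$) near $z=\infty$, the reconstruction \eqref{eq2.9} gives $q(x,0)=2i\lim_{z\to\infty}(zT(z))_{12}=\mathcal{O}(e^{-c_-|x|})$. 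I would simply invoke \eqref{fuwuqiong}.

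For $x\to+\infty$ (claim (2)) I would assemble the chain $T\rightsquigarrow T^{(1)}\rightsquigarrow T^{(2)}=\mathcal{E}\cdot S$ and push the reconstruction formula through it, tracking only the coefficient of $z^{-1}$ in the $(1,2)$ entry as $z\to\infty$. The lens transformations $G(z)$ equal $I$ in a neighbourhood of $\infty$, so they leave that coefficient unchanged; the transformation $T^{(1)}=e^{ig_0\sigma_3}T e^{-ig(z)\sigma_3}f(z)^{\sigma_3}$ is a product of \emph{diagonal} factors with $g(z)\to g_0$ and $f(z)\to1$, so it multiplies the coefficient by the phase $e^{-2ig_0(x,0)}$; and $\mathcal{E}(z)=I+\mathcal{O}(e^{-c_+x})$ uniformly in $z$, so the error factor contributes only $\mathcal{O}(e^{-c_+x})$. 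Hence
\[
q(x,0)=2i\,e^{-2ig_0(x,0)}\,\big[\text{coefficient of }z^{-1}\text{ in }(S(z))_{12}\big]+\mathcal{O}(e^{-c_+x}),
\]
where I read the error in the statement as exponentially decaying (consistent with $\mathcal{E}=I+\mathcal{O}(e^{-c_+x})$, $c_+>0$) and set $t=0$ in the general $t$-dependent expression.

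The main step is then to evaluate the bracket from the explicit theta-function form of $S(z)$ and to recognise the result. I would expand $S(z)$ at $z=\infty$ using $A(+\infty)=\tfrac14$, the leading behaviour of $A(z)$ and of $\gamma(z)$ (hence of $\Gamma^{\pm}(z)$) near $\infty$, and the evenness and quasi-periodicity of $\theta(\,\cdot\,;\tau)$; this rewrites the bracket as an explicit constant depending only on $\eta_1,\eta_2$ times a ratio of theta values evaluated at $\varpi=\tfrac{\Omega+\Delta}{2\pi}$. Multiplying by $e^{-2ig_0(x,0)}$ and using $\Omega=\Omega(x,0)=\frac{\pi\eta_2}{K(m)}x$, the product rearranges, via Jacobi's classical representation of $\operatorname{dn}$ as a quotient of theta functions, into $-(\eta_1+\eta_2)\operatorname{dn}\big((\eta_1+\eta_2)(x-x_0);m_1\big)$. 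The modulus and wavenumber are fixed by the descending Landen (Gauss) transformation applied to $m=\eta_1^2/\eta_2^2$, which gives $m_1=\tfrac{4\eta_1\eta_2}{(\eta_1+\eta_2)^2}$ and $K(m_1)=\big(1+\tfrac{\eta_1}{\eta_2}\big)K(m)$ — exactly what turns $\frac{\eta_2}{K(m)}x$ into $\frac{\eta_1+\eta_2}{K(m_1)}x$ inside the theta argument. Finally $x_0$ is identified by collecting the contributions of $\Delta$ — equivalently of $\log r$, with $r(z)=\delta S(z)(\varpi_1(z)-\overline{\varpi_1(\bar z)}+\varpi_2(z))$ — together with the half-period $A_+(i\eta_1)=-\tfrac{\tau}{2}$; this is what \eqref{parameter} records.

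I expect the last paragraph to be the real obstacle: forcing the three ingredients — the theta quotient coming from $S$, the algebraic $\gamma$-factors, and the phase $e^{-2ig_0}$ — to collapse into a single $\operatorname{dn}$ with exactly the modulus $m_1$ and shift $x_0$ is where every constant must line up, and it hinges on the right theta identity and Landen transformation. A secondary point that needs to be stated carefully is the uniformity in $z$ of $\mathcal{E}(z)=I+\mathcal{O}(e^{-c_+x})$ near the four endpoints $\pm i\eta_1,\pm i\eta_2$: it uses the Assumption that $r(z)\sim|z\mp i\eta_j|^{\pm1/2}\tilde r(z)$ with $\tilde r$ analytic, bounded and nonvanishing, so that the lenses can be opened to make the jumps exponentially close to $I$ without any Airy/Bessel local parametrices — this is established in Section \ref{secinitial} and only has to be cited.
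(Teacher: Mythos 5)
Your proposal follows essentially the same route as the paper: claim (1) is exactly the small-norm/quiescent-background argument culminating in \eqref{fuwuqiong}, and claim (2) is the same chain $T\rightsquigarrow T^{(1)}\rightsquigarrow T^{(2)}=\mathcal{E}\cdot S$ with the reconstruction formula pushed through the diagonal conjugation and the outer model $S$, whose $z^{-1}$ coefficient is converted to $\operatorname{dn}$ via theta-function identities. You in fact supply more detail than the paper at the final identification step (the Landen transformation fixing $m_1$ and the wavenumber) and correctly read the remainder as exponentially decaying, $\mathcal{O}(e^{-c_+x})$, consistent with $\mathcal{E}=I+\mathcal{O}(e^{-c_+x})$.
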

\begin{proof}			

We   focus on the asymptotic behaviour for $x\to+\infty$. Tracing back the chain of transformations from the remainder problem $\mathcal{E}(z)$ to the original RH problem \ref{RH2.4} for $T(z)$, we find that
\begin{align*}
		q(x,t) &= 2ie^{ig_0(x,t)} \lim_{|z| \to \infty} (zT_{12}^{(2)}(z)e^{-ig(z)}f(z)),\\
			&= 2ie^{ig_0(x,t)} \lim_{|z| \to \infty} (zS_{12}(z))e^{-ig(z)}f(z))+\mathcal{O}(e^{c_+x}),\\
				&= -e^{ig_0(x,t)}(\eta_2-\eta_1)\frac{\theta(\frac{1}{2}+\varpi;\tau)}{\theta(\frac{1}{2};\tau)}\cdot\frac{\theta(0;\tau)}{\theta(\varpi;\tau)}
				e^{-ig(z)}f(z)+\mathcal{O}(e^{c_+x}),\\
		 &= -(\eta_2+\eta_1){\rm dn}\left((\eta_2+\eta_1)(x-2(\eta_1^2+\eta_2^2)t-x_0);m_1\right)+\mathcal{O}(e^{c_+x}),
\end{align*}
where $c_+>0$ and
\begin{equation}\label{parameter}
	m_1=\frac{4\eta_2\eta_1}{(\eta_1+\eta_2)^2},\quad x_0=\frac{K(m)(\Delta-\pi)}{\eta_2\pi}.
\end{equation}
Subsequently, by incorporating equation \eqref{fuwuqiong}, we complete the proof of Theorem \ref{th33}.
\end{proof}		

\vspace{4mm}

	\noindent\textbf{Acknowledgements}
	
	This work is supported by  the National Natural Science
	Foundation of China (Grant No. 12271104,  51879045).\vspace{2mm}
	
	\noindent\textbf{Data Availability Statements}
	
	The data that supports the findings of this study are available within the article.\vspace{2mm}
	
	\noindent{\bf Conflict of Interest}
	
	The authors have no conflicts to disclose.

	\end{document}